\documentclass[amsmath,amssymb,14pt]{article}

\usepackage{graphicx, epsfig}

\usepackage{amssymb,amsmath,color}
\usepackage{amsthm}
\usepackage[a4paper, total={6.5in, 10in}]{geometry}

\numberwithin{equation}{section}


\usepackage{subcaption}
\usepackage{caption}
\usepackage[export]{adjustbox}

\usepackage[nottoc]{tocbibind}

\usepackage{dcolumn}
\usepackage{bm}

\newtheorem{mytheo}{Theorem}[section]
\newtheorem{mylemma}{Lemma}[section]

\newtheorem{myproposition}[mytheo]{Proposition}

\newtheorem{myexample}[mytheo]{Example}

\newtheorem{myremark}[mytheo]{Remark}

\newtheorem{mytheo*}{Theorem}
\newtheorem{mylemma*}{Lemma}
\newtheorem{myproposition*}{Proposition}
\newtheorem{mydef*}{Definition}[section]
\newtheorem{myremark*}{Remark}
\newtheorem{myproof*}{Proof}
\newtheorem*{mynotation*}{Notation}
\newtheorem{mycorollary*}{Corollary}
\newtheorem*{theorem*}{Theorem}

\usepackage{xcolor}

\DeclareMathOperator{\Real}{Re}

\usepackage{mathtools}

\usepackage{authblk}

\begin{document}
\title{On  Schr\"{o}dinger Operators \\ Modified by $\delta$ Interactions}

\author[1]{Kaya G\"{u}ven Akba\c{s}}
\author[2]{Fatih Erman}
\author[3]{O. Teoman Turgut}
\affil[1, 3]{Department of Physics, Bo\u{g}azi\c{c}i University, Bebek, 34342, \.{I}stanbul, Turkey}
\affil[2]{Department of Mathematics, \.{I}zmir Institute of Technology, Urla, 35430, \.{I}zmir, Turkey}
\affil[1]{haci.akbas1989@gmail.com}
\affil[2]{fatih.erman@gmail.com}
\affil[3]{turgutte@boun.edu.tr}

\maketitle

\begin{abstract} We study the spectral properties of a Schr\"{o}dinger operator $H_0$ modified by $\delta$ interactions and show explicitly how the poles of the new Green's function are rearranged relative to the poles of original Green's function of $H_0$. We prove that the new bound state energies are interlaced between the old ones, and the  ground state energy is always lowered if the $\delta$ interaction is attractive. We also derive an alternative perturbative method of finding the bound state energies and wave functions under the assumption of a small coupling constant in a somewhat heuristic manner. We further show that these results can be extended to cases in which a renormalization process is required. We consider the possible extensions of our results to the multi center case, to $\delta$ interaction supported on curves, and to the case, where the particle is moving in a compact two-dimensional manifold under the influence of $\delta$ interaction. Finally, the semi-relativistic extension of the last problem has been studied explicitly. \end{abstract}

Keywords: Dirac $\delta$ interactions, point interactions, Green's function, renormalization, Schr\"{o}dinger operators, spectrum, resolvent. 
\section{Introduction} \label{introduction}

We consider a self-adjoint Schr\"{o}dinger operator or Hamiltonian $H_0=-\frac{\hbar^2}{2m} \Delta + V$ defined on some dense domain $D(H_0)$ and assume that $H_0$ admits a discrete or point spectrum $\sigma_d(H)$ and a (purely absolutely) continuous real spectrum $\sigma_c(H)$, which does not overlap with the discrete spectrum (no embedded eigenvalues). We further assume that the discrete spectrum has no condensation point and the spectrum is bounded below. These conditions put some restrictions on the regularity properties of the potentials, but this is the typical situation in most of the quantum mechanical problems with physically reasonable potentials. The spectrum consists of eigenvalues $E_n$ of $H_0$ with finite multiplicities,   
\begin{eqnarray}
    H_0 \phi_n = E_n \phi_n \;,
\end{eqnarray}
where $\phi_n \in L^2$ are the eigenvectors (or eigenfunctions) corresponding to the eigenvalues $E_n$, and the values $\lambda$ in the continuous spectrum are the ``generalized eigenvalues" corresponding to the ``generalized eigenvectors" $\chi_{\lambda}$ (or generalized eigenfunctions), which is outside of $L^2$. Here we omit the degeneracy/multiplicity labels for simplicity. Although {\it we are not aiming for a completely rigorous presentation in this work}, we point out that the proper meaning of the generalized eigenfunctions has been understood in the context of so-called Rigged Hilbert spaces \cite{Bohm, BohmGadella, BerezinShubin} and the equation for the generalized eigenvalues are given by $H_0 \chi_{\lambda} = \lambda \chi_{\lambda}$ in the sense of distributions. One may think of a non-zero tempered distribution $\chi_{\lambda}$ as a generalized eigenfunction associated with the eigenvalue $\lambda$ for $H_0$ if and only if $\langle H_0\chi, \psi \rangle = \langle \chi_{\lambda}, H_0 \psi \rangle = \lambda \langle \chi_{\lambda}, \psi \rangle$ for any infinitely differentiable rapidly decaying functions $\psi$ (Schwartz functions), see e.g., \cite{Appel} for a more elementary discussion. The function $\chi_{\lambda}(x)=\frac{1}{\sqrt{2\pi}} e^{i \lambda x/\hbar}$ is the well-known generalized eigenfunction of the momentum operator $(P \psi)(x)=-i\hbar \frac{d}{dx} \psi(x)$ in $L^2(\mathbb{R})$. Under relatively mild assumptions on the potential $V$ for dimensions $d\leq 3$ case, it is known that the generalized eigenfunctions (of a Schr\"odinger operator) can actually be selected as {\it continuous functions} \cite{Berezanskii, BeliyKovalenkoSemenov}. This is extremely valuable for our computations as we either evaluate them at a point or integrate them over a curve in this work.

According to one of the fundamental assumptions of quantum mechanics, we can formally expand any function $\psi \in L^2(\mathbb{R})$ as (see e.g., page 48 in \cite{GalindoPascual1}):
\begin{eqnarray}
    \psi(x)= \sum_{n} a_n \phi_n(x) + \int_{\Lambda} a(\lambda) \chi_{\lambda}(x) \; d \mu(\lambda) \;, \label{generalizedeigenfunctionexpansion}
\end{eqnarray}
where 
\begin{eqnarray}
    a_n= \int_{\mathbb{R}} \overline{\phi_n(x)} \psi(x)d x \;, \hspace{1cm} a(\lambda)= \int_{\mathbb{R}} \overline{\chi_{\lambda}(x)} \psi(x)d x \;.
\end{eqnarray}
The  domain of integration in equation (\ref{generalizedeigenfunctionexpansion}) is represented by $\Lambda$, which is a parameter space,  a subset of $\mathbb{R}$ or in general of $\mathbb{R}^n$,  constructed from the absolutely continuous spectrum $\sigma_c(H_0)$, see \cite{Maurin} and the relatively recent work \cite{GadellaGomez}. The expression $d\mu(\lambda)$ in the above expansion is indeed a kind of spectral measure associated with the self-adjoint operator $H_0$. Typically, for most  quantum mechanical problems, there is no singular continuous  spectrum, which allows us to write the above measure  as $d \mu(\lambda)=d\lambda$ (by absorbing a possible positive measurable function into the definition of eigenfunctions), instead of a more general one,  if we interpret the integration as a direct integral \cite{GalindoPascual1}. If there is a degeneracy, we need to sum over those indices as well. Fourier transform is actually a formal eigenfunction expansion of a function $\psi$ in terms of the generalized eigenfunctions $\chi_{\lambda}$ of the momentum operator $P$ and the action of $P$ on the function $\psi$ is given by:
%
\begin{eqnarray}
    (P\psi) (x) = \int_{-\infty}^{\infty} \lambda \left(\int_{-\infty}^{\infty} \psi(\xi) \overline{\chi_{\lambda}(\xi)} d \xi\right) \chi_{\lambda}(x) d \lambda \;.
\end{eqnarray}

The theory of such eigenfunction expansions was first described by  Weyl \cite{Weyl} for ordinary differential  equations, and then developed partly by Titchmarsh \cite{Titchmarsh1} and by Kodaira \cite{Kodaira}. This is extended to the multidimensional case for {\it elliptic} self-adjoint differential operators in \cite{Berezanskii}. In particular, such expansions in terms of generalized eigenfunctions are rigorously constructed, if we impose the uniformly locally square integrability condition on the potential energy in the Schr\"{o}dinger operator $H_0=-\frac{\hbar^2}{2m} \Delta + V$, which is stated as Theorem 3 in \cite{PoerschkeStolzWeidmann}. In this paper, we tacitly assume that these mild conditions for the eigenfunction expansions hold.

Our main interest here is to consider such Schr\"{o}dinger operators $H_0$ modified by a $\delta$ interaction and show explicitly how the eigenvalues and eigenfunctions change using the eigenfunction expansions of the Green's function for such generic $H_0$. The modification of the bound state spectrum  in one dimension and $d$ dimensional radial case, have been studied in the framework of path integrals in an  influential work of Grosche \cite{Grosche} for some exactly solvable potentials $V$. One of the simplest choices for $H_0$ is the well-known harmonic oscillator problem in one dimension (as well as  the radially symmetric extension of it) and the effect of  adding a $\delta$ interaction to harmonic oscillator Hamiltonians has been studied by many authors \cite{Fassari1, Fassari2, Ersan2} and the statistical properties of this system have been worked out in \cite{JankeCheng}. Moreover, its application  to Bose-Einstein condensation has been investigated in \cite{Huncu}. Another simple example is an infinite square well potential modified by $\delta$ interactions in one dimension and studied in \cite{gadella2014infinite}. A more sophisticated model is a one dimensional $V$-shaped quantum well modified by a point potential centered at the origin has been considered in \cite{fassari2018spectroscopy}.

It is well-known that the description of point like $\delta$ interactions in two and three dimensions requires renormalization and they have been studied in \cite{Albeverio2012solvable, AlbeverioKurasov} in a mathematically rigorous way and in \cite{Huang, Jackiw, GosdzinskyTarrach, MeadGodines, ManuelTarrach} as toy models in understanding of some quantum field theoretical concepts. The higher dimensional version of the harmonic oscillator Hamiltonian with $\delta$ potential requires renormalization as well and has been discussed in \cite{Grosche2} and more recently in \cite{AlbeverioFassari1, AlbeverioFassari2, Fassari3Gadella}. A more interesting exactly solvable example, having both discrete and continuous spectrum, is the so-called reflectionless potential \cite{Grosche, LandauLifshitz}, and we wish to show how its spectral properties change under the influence of $\delta$ interaction as a case study.

It turns out that eigenvalues of $H_0$ modified by $\delta$ interactions change according to some algebraic or transcendental equation and all the eigenvalues of $H_0$ disappear unless some set of wave functions vanish at the support of the $\delta$ interaction or if the support of Dirac delta function is chosen to be at the nodes of the wave function of the initial Hamiltonian $H_0$. This result was not completely illustrated for a  generic Hamiltonian $H_0$, but only shown for particular exactly solvable cases \cite{Grosche, Grosche2}. In all these cases, the full Green's function $G$ of the modified system contains the Green's function $G_0$ of $H_0$ added to  another term constructed again from $G_0$.  It is not at all obvious that the poles of $G_0$ cancel with the poles of this additional term (which has $G_0$ appearing in its expression in a nontrivial combination). This cancellation has only been pointed out for $H_0$ being the one dimensional harmonic oscillator Hamiltonian in \cite{Fassari1} and for higher dimensional harmonic and linear potentials in \cite{Fassari3Gadella}.  In this paper, we  prove this explicitly by using the eigenfunction expansion of the full Green's function by taking the generalized eigenfunction expansion into account and studying its pole structure for a general class of potentials. Moreover, we prove that if the support of the $\delta$ interaction is not at the node of the bound state eigenfunction of $H_0$, then the new eigenvalues $E_{k}^{*}$ are interlaced between $E_{k-1}$ and $E_k$ for an attractive $\delta$ interaction. We then develop a perturbative method from a different perspective to compute order by order the new eigenvalues and wave functions (in a somewhat heuristic way) under the assumption of small coupling, then compare the results  with the standard approach. All the results that we have found can easily be extended to the multi center case, and to the case where we need to apply renormalization. Our perturbation method applied to the problems which require renormalization yields different results from the standard approach obtained by replacing the bare coupling constant $\alpha$ with the renormalized one $\alpha_R$, as expected. We further show that similar conclusions can be drawn for the modification by $\delta$ interactions supported on curves in the plane and 
for the problem, for which the particle is intrinsically moving in a compact manifold under the influence of a $\delta$ interaction.

This paper is organized as follows. In Section \ref{section2}, using the eigenfunction expansion of the Green's function we have explicitly shown that the poles of the Green's function of $H_0$ modified by $\delta$ interaction in one dimension cancels out the poles of the Green's function of $H_0$ under some mild conditions and prove that the new bound state energies are interlaced between the old ones and discuss the results with an explicit exactly solvable reflectionless potential. Then, we develop a new perturbative way of finding the bound state energies and the bound state wavefunctions up to the second order and compare them with the classical known results. 
Section \ref{ModificationsbydeltaPotentialsinSingularCases} deals with the extension of the results to the singular case, where the renormalization of the problem is required. Finally, Section \ref{PossibleGeneralizations} is devoted to the possible extensions of the results, e.g., $N$ center case, $\delta$ interaction supported on a curve in the plane, and the case where a particle is moving on a compact manifold interacting with a point center. We generalize our arguments to a (possible) semi-relativistic version of singular interaction. In Appendix A, we prove the sum $\sum_{n=0}^{\infty} \frac{|\phi_n(a)|^2(E+\mu^2)}{(E_n-E)(E_n+\mu^2)}$ is convergent for particular class of Schr\"{o}dinger operators and for free Schr\"{o}dinger operators on two dimensional compact manifolds. In Appendix B, we prove that there is indeed one single parameter in the renormalized theory. In Appendix C, we prove that the sum $\sum_{n=0}^{\infty} { (E+\mu^2)|\phi_n(a)|^2\over (E_n-E)(E_n+\mu^2)}$ is going to $-\infty$ as $E \to -\infty$ for particular class of Schr\"{o}dinger operators and for free Schr\"{o}dinger operators on two dimensional compact manifolds. In Appendix D, we study our main problem here when there is degeneracy.

\section{Modifications by $\delta$ Interactions in Regular Cases} \label{section2}

Throughout the paper, we use some well-known properties of Green's function. For  convenience of the reader, we state them here with our choice of notation. For instance, the integral kernel of the resolvent $R_{H_0}(z)$ for the Schr\"{o}dinger operator $H_0=-\frac{\hbar^2}{2m} \frac{d^2}{d x^2}+V(x)$ or simply Green's function defined by
\begin{eqnarray}
 \left(R_{H_0}(E)\psi\right)(x)= \left(R_0(E)\psi\right)(x) = \left((H_0-E)^{-1}\psi\right)(x)= \int_{\mathbb{R}} G_0(x,y|E)\psi(y) d y   \;,
\end{eqnarray}
can be expressed as the following bilinear expansion
\begin{eqnarray}
    G_0(x,y|E) = \sum_{n=0}^{\infty} \frac{\phi_n(x) \overline{\phi_n(y)}}{E_n-E} + \int_{\Lambda} \frac{\chi_{\lambda}(x) 
    \overline{\chi_{\lambda}(y)}}{\lambda-E} \; d \mu(\lambda) \;. \label{greenfuncexpansion}
\end{eqnarray}
The Green's function $G_0(x,y|E)$ is a square integrable function of $x$ for almost all $y$ and vice versa \cite{ReedSimonv3}. 
The above formulas still hold in higher dimensions as well.

Suppose that we first consider one dimensional problems in which $H_0$ is modified by a $\delta$ function interaction supported at the origin 
\begin{eqnarray}
    H=H_0 - \alpha \delta \;, \label{Hamiltonian}
\end{eqnarray}
where $\alpha \in \mathbb{R}$. There are different ways to make sense of the above formal expression  of $H$. One way is to consider the $\delta$ interaction as a self-adjoint extension of $H_0$. A modern introduction to this subject is the recent book by Gallone and Michelangeli \cite{allesandro} and the classic reference elaborating this point of view is the monograph by Albeverio et all \cite{Albeverio2012solvable}. $\delta$ interactions can also be defined through the strong limit of the resolvent of Hamiltonians with $\delta$ interaction replaced by some scaled function, see \cite{AlbeverioKurasov, DimockRajeev} for the details. The formulation of the problem in this section can be extended to higher dimensions as long as the co-dimension (dimension of the space - dimension of the support of the $\delta$ interaction) is one. We will come back to this issue for possible generalizations of the problem later on.

It is well-known that Green's function for the Hamiltonian (\ref{Hamiltonian}) is given by Krein's type of formula
\begin{eqnarray} \label{GreensfunctionNR}
G(x,y|E)= G_0(x,y|E) + \frac{G_0(x,0|E) G_0(0,y|E)}{\Phi(E)} \;,
\end{eqnarray}
where $G_0(x,y|E)$ is the Green's function for $H_0$ and 
\begin{eqnarray}
 \Phi(E):= \frac{1}{\alpha}-G_0(0,0|E) \;. \label{principalfunction}
\end{eqnarray}
Some of the books in the context of point interactions use the notation $\Gamma$ for our choice $\Phi$. 
Here and subsequently, as emphasized in the introduction, we assume that $H_0=-\frac{\hbar^2}{2m} \frac{d^2}{dx^2}+V(x)$ satisfies some conditions:
\begin{itemize}
    \item $H_0$ is self-adjoint on some dense domain $D(H_0) \subset L^2(\mathbb{R})$, 
    \item Spectrum of $H_0$ is a disjoint union of discrete $\sigma_d(H_0)$ (set of eigenvalues) and (absolutely) continuous spectrum $\sigma_c(H_0)$,
    \item The discrete spectrum has no accumulation point,
    \item For stability, we assume $H_0$ has spectrum  bounded below,
\end{itemize}
and these assumptions are assumed to hold in higher dimensions as well. These conditions on the spectrum put some mild restrictions on the potential $V$. Some of the possible classes of potentials are listed in the classical work of Reed and Simon \cite{ReedSimonv4}.

Then, we claim our first result about the spectral properties of the Hamiltonian (\ref{Hamiltonian}):
\begin{myproposition} \label{Prop1}
Let $\phi_k(x)$ be the bound state wave function of $H_0$ associated with the bound state energy $E_k$. Then, 
the bound state energy $E_{k}^{*}$, when $H_0$ is modified (perturbed) with a $\delta$ function interaction ($-\alpha \delta(x)$), satisfies the equation
\begin{eqnarray}
\Phi(E)=\frac{1}{\alpha}-G_0(0,0|E) = 0 \;, \label{rootsofPhi}
\end{eqnarray}
if $\phi_k(0) \neq 0$ for  this particular $k$. If for this  choice of  $k$ we have $\phi_k(0)=0$, the bound state energy does not change, $E_{k}^{*}=E_k$.  Moreover, the continuous spectrum of the Hamiltonian modified with $\delta$ interaction is the same as that of $H_0$. 
\end{myproposition}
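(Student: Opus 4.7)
The plan is to work directly from Krein's formula (\ref{GreensfunctionNR}) together with the bilinear expansion (\ref{greenfuncexpansion}), and identify the bound state energies of $H$ with the real poles of $G(x,y|E)$ lying outside $\sigma_c(H_0)$. A pole of $G$ can come from two sources: either from the first term $G_0(x,y|E)$ (the original bound states at $E=E_n$), or from the zeros of $\Phi(E)$ in the denominator of the correction term. The heart of the argument is to show that these two sources are almost exclusive: generically the original poles cancel against matching poles supplied by $\Phi$, while zeros of $\Phi$ produce genuinely new poles.

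First I would isolate the behavior of the ingredients near $E=E_k$. From (\ref{greenfuncexpansion}) one reads off the single-pole structure
\begin{equation*}
G_0(x,y|E) \;=\; \frac{\phi_k(x)\,\overline{\phi_k(y)}}{E_k-E} \;+\; R_k(x,y|E),
\end{equation*}
with $R_k$ holomorphic at $E_k$. In particular $G_0(x,0|E)G_0(0,y|E)$ behaves like $|\phi_k(0)|^2\,\phi_k(x)\overline{\phi_k(y)}/(E_k-E)^2$ to leading order, while $\Phi(E)=1/\alpha-G_0(0,0|E)$ behaves like $-|\phi_k(0)|^2/(E_k-E)$ plus something regular. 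If $\phi_k(0)\neq 0$, I would expand the correction term as the ratio of these two singular expressions and observe that the leading singularity simplifies to $-\phi_k(x)\overline{\phi_k(y)}/(E_k-E)$, which exactly cancels the pole carried by the first term in (\ref{GreensfunctionNR}); the remainder is holomorphic at $E_k$. Thus $E_k$ ceases to be a pole of $G$, and the surviving poles outside $\sigma_c(H_0)$ are precisely the solutions $E_k^{*}$ of (\ref{rootsofPhi}). If instead $\phi_k(0)=0$, then both $G_0(x,0|E)G_0(0,y|E)$ and $\Phi(E)$ are regular at $E_k$ (the only term whose residue vanishes is the one that would have produced the pole), and moreover $\Phi(E_k)\neq 0$ for generic $\alpha$; hence the correction is regular at $E_k$, while $G_0(x,y|E)$ itself still has its pole there, so $E_k^{*}=E_k$ remains a bound state of $H$ with unchanged wave function (up to normalization).

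For the continuous spectrum, I would compute the discontinuity of $G$ across the cut $\sigma_c(H_0)$ using the Sokhotski--Plemelj formula already recalled in the text. Writing $G_0(x,y|E\pm i0)$ in (\ref{GreensfunctionNR}) and subtracting, the jump of $G$ is a finite nontrivial combination of jumps of $G_0(x,y|\cdot)$, $G_0(x,0|\cdot)$, $G_0(0,y|\cdot)$ and $G_0(0,0|\cdot)$, each of which is supported on $\sigma_c(H_0)$ and proportional to products of generalized eigenfunctions $\chi_E$; consequently $G$ inherits the same branch cut and no additional cut is produced, so $\sigma_c(H)=\sigma_c(H_0)$. (An alternative, quicker route is to invoke the invariance of the essential spectrum under the finite-rank type perturbation realized by the $\delta$ interaction, in the spirit of Weyl's theorem, but the direct computation fits more naturally with the spirit of the paper.)

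The main obstacle I anticipate is the bookkeeping in the cancellation step: one must verify that the limit of the ratio of the two singular expressions is computed correctly up to the constant term, so that exactly the residue $\phi_k(x)\overline{\phi_k(y)}$ is removed and no spurious simple pole is left behind. This hinges on the leading residue of $-G_0(0,0|E)$ at $E_k$ being exactly $|\phi_k(0)|^2$ with the correct sign, which in turn follows from the bilinear expansion (\ref{greenfuncexpansion}). Once this is in hand, the rest of the proof is essentially an organized reading of the orders of poles in the numerator versus the denominator of the Krein correction.
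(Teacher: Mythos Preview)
Your proposal is correct and follows essentially the same route as the paper: isolate the $k$th term in the bilinear expansion, verify that when $\phi_k(0)\neq 0$ the simple pole of $G_0$ at $E_k$ is exactly cancelled by the leading singularity of the Krein correction $G_0(x,0|E)G_0(0,y|E)/\Phi(E)$, and conclude that the only surviving poles are the zeros of $\Phi$. The one minor difference is that for the continuous spectrum the paper invokes Weyl's theorem directly (finite-rank difference of resolvents), whereas you favor the explicit branch-cut discontinuity computation and mention Weyl only as an alternative; both arguments are fine and the paper in fact carries out your discontinuity computation later in Section~\ref{GeneralizedEigenfunctionsforH0ModifiedbydeltaInteractions} when extracting the generalized eigenfunctions.
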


\begin{proof} We first explicitly show how the pole structure of the full Green's function $G(x,y|E)$ is rearranged and the poles of $G_0(x,y|E)$, which explicitly appears as an additive factor in $G(x,y|E)$ actually are cancelled and new poles appear. Note that from  the explicit expression of the full Green's function (\ref{GreensfunctionNR}), one may expect that the poles of the full Green's function may contain the poles of $G_0$ as well as  the zeroes of the function $\Phi(E)$.
Using the eigenfunction expansion of the Green's function (\ref{greenfuncexpansion}) and splitting the term in the summation associated with the isolated simple eigenvalue $E_k$ of $H_0$, we obtain
\begin{eqnarray} & & \hskip-2cm
    G(x,y|E)=  \frac{\phi_k(x) \overline{\phi_k(y)}}{E_k-E}  \left( 1- \left( 1- \frac{(E_k-E)}{|\phi_k(0)|^2} D(\alpha, E) \right)^{-1}\right) + g(x,y|E)+ h(x,y|E) \nonumber \\ & & + \, \frac{(E_k-E)}{(E_k-E)D(\alpha, E) - |\phi_k(0)|^2} \bigg(g(x,0|E) g(0,y|E) + g(x,0|E) h(x,0|E) \nonumber \\ & & \hspace{2cm} + \, g(x,0|E) h(0,y|E) + h(x,0|E) h(0,y|E)  \bigg) \nonumber \\ & & + \, \frac{1}{(E_k-E)D(\alpha, E) - |\phi_k(0)|^2} \bigg(g(x,0|E) \phi_k(0) \overline{\phi_k(y)} + g(0,y|E) \phi_k(x) \overline{\phi_k(0)}  \nonumber \\ & & \hspace{2cm} + \, h(x,0|E) \phi_k(0) \overline{\phi_k(y)} + h(0,y|E) \phi_k(x) \overline{\phi_k(0)}  \bigg) \;,
\label{greenNR2} \end{eqnarray}
where we have defined the following functions, which are regular near $E_k$,  for simplicity
\begin{eqnarray}
    g(x,y|E) & := & \sum_{n \neq k} \frac{\phi_n(x) \overline{\phi_n(y)}}{E_n-E} \;, \\
    h(x,y|E) & := & \int_{\Lambda} \frac{\chi_{\lambda}(x) \overline{\chi_{\lambda}(y)}}{\lambda-E} \; d \mu(\lambda) \;,\\
    D(\alpha, E) & := & \frac{1}{\alpha}-\sum_{n \neq k} \frac{|\phi_n(0)|^2}{E_n-E} - \int_{\Lambda} \frac{|\chi_{\lambda}(0)|^2}{\lambda-E} \; d \mu(\lambda) \;.
\end{eqnarray}
The functions $g, h$ at $x=0$ and $y=0$, and the function $D$ are well-defined thanks to the fact that the generalized eigenfunctions $\chi_{\lambda}$ in here have continuous representatives \cite{Berezanskii, BeliyKovalenkoSemenov}. Except for the first term in equation (\ref{greenNR2}), it is easy to see that all terms are analytic  in a sufficiently small disk around $E=E_k$. If we choose $E$ sufficiently close to $E_k$, i.e., if $\frac{|E_k-E|}{|\phi_k(0)|^2} \left|D(\alpha, E)\right|<1$, the first term in the above equation becomes
\begin{eqnarray}
- \frac{\phi_k(x) \overline{\phi_k(y)}}{|\phi_k(0)|^2}  \left(\frac{1}{\alpha}-\sum_{n \neq k} \frac{|\phi_n(0)|^2}{E_n-E} - \int_{\Lambda} \frac{|\chi_{\lambda}(0)|^2}{\lambda-E} \; d \mu(\lambda) \right) + O(1) \;,
\end{eqnarray}
so that $G(x,y|E)$ is regular near $E=E_k$ as long as $\phi_k(0)\neq 0$. Hence, the pole $E=E_k$ of $G_0(x,y|E)$ is {\it not a pole} of $G(x,y|E)$ if $\phi_k(0)\neq 0$.
Then, the only poles of $G(x,y|E)$ must  come from the zeroes of the function $\Phi(E)$. 

Finally, it follows from Weyl's theorem \cite{ReedSimonv4} that the continuous spectrum of the problem coincides with the initial Hamiltonian since the difference between the resolvent of the Hamiltonian and the resolvent of the initial Hamiltonian is of finite rank thanks to the explicit formula
\begin{eqnarray}
    R(E)=R_0(E) + (\Phi(E))^{-1} \langle \overline{G(\cdot, 0|E)}, \cdot \rangle G(\cdot, 0|E) \;, \label{resolventexplicitformula}
\end{eqnarray}
defined on its resolvent set. This formula should be understood by its action on a vector $\psi$, i.e.,
\begin{eqnarray}
    (R(E) \psi)(x)=(R_0(E)\psi)(x)+ (\Phi(E))^{-1} \left(\int_{-\infty}^{\infty} G(y, 0|E) \psi(y) \;d y \right) G(x, 0|E) \;.
\end{eqnarray}
The formula (\ref{resolventexplicitformula}) can be seen more naturally in Dirac's bra-ket notation, 
\begin{eqnarray}
    R(E) = R_0(E) + (\Phi(E))^{-1} R_0(E) |0 \rangle \langle 0 | R_0(E) \;.
\end{eqnarray}

\end{proof}
\begin{mylemma} \label{Lemma1}
If $\alpha>0$ (attractive case) and $\phi_k(0) \neq 0$ for some $k \geq 1$, then the new bound state energies $E_{k}^{*}$ are interlaced between the eigenvalues of $H_0$:
\begin{eqnarray}
    E_{k-1} < E_{k}^{*} < E_{k} \;.
\end{eqnarray}
For the ground state ($k=0$), we always have  $E_{0}^{*} < E_0$.

\end{mylemma}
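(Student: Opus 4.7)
The plan is to analyze the transcendental equation $\Phi(E) = 0$ from Proposition \ref{Prop1} by a direct monotonicity argument on each component of the real resolvent set of $H_0$. Substituting the bilinear expansion (\ref{greenfuncexpansion}) at $x = y = 0$ into (\ref{principalfunction}) gives
\begin{equation*}
\Phi(E) = \frac{1}{\alpha} - \sum_{n} \frac{|\phi_n(0)|^2}{E_n - E} - \int_{\sigma_c(H_0)} \frac{|\chi_{\lambda}(0)|^2}{\lambda - E}\, d\mu(\lambda),
\end{equation*}
a real meromorphic function on $\mathbb{R}\setminus \sigma_c(H_0)$ whose only singularities are simple poles at those eigenvalues $E_n$ for which $\phi_n(0)\neq 0$.

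First I would differentiate term by term, which is legitimate because the expansion of $G_0(0,0|E)$ converges uniformly on compact subsets of the resolvent set (as is already implicit in the use of the Krein-type formula (\ref{GreensfunctionNR})), obtaining
\begin{equation*}
\Phi'(E) = -\sum_{n} \frac{|\phi_n(0)|^2}{(E_n - E)^2} - \int_{\sigma_c(H_0)} \frac{|\chi_{\lambda}(0)|^2}{(\lambda - E)^2}\, d\mu(\lambda) < 0,
\end{equation*}
so $\Phi$ is strictly decreasing on every component of $\mathbb{R}\setminus(\sigma_d(H_0)\cup\sigma_c(H_0))$. Next I would inspect the endpoint behavior of $\Phi$ on $(E_{k-1},E_k)$: since $\phi_k(0)\neq 0$, the singular contribution $-|\phi_k(0)|^2/(E_k-E)$ drives $\Phi(E)\to -\infty$ as $E\to E_k^-$, while (assuming also $\phi_{k-1}(0)\neq 0$) the analogous contribution drives $\Phi(E)\to +\infty$ as $E\to E_{k-1}^+$. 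Strict monotonicity together with the intermediate value theorem then yields a unique zero $E_k^\ast\in (E_{k-1},E_k)$. For the ground state the interval $(-\infty,E_0)$ is free of singularities of $\Phi$, and because $H_0$ is bounded below every summand and integrand vanishes as $E\to -\infty$, giving $\Phi(E)\to 1/\alpha>0$ (using $\alpha>0$), while $\Phi(E)\to -\infty$ as $E\to E_0^-$; monotonicity then produces the unique root $E_0^\ast<E_0$.

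The main obstacle is technical rather than structural: one must justify the uniform convergence of the spectral sum/integral for $G_0(0,0|E)$ that underlies both the term-by-term differentiation and the interchange of limit with sum and integral implicit in $\lim_{E\to -\infty}\Phi(E)=1/\alpha$. These steps rely on the same mild regularity hypotheses on $V$ and on the spectral density $|\chi_{\lambda}(0)|^2\,d\mu(\lambda)$ already invoked in writing the eigenfunction expansion (\ref{greenfuncexpansion}). A minor bookkeeping issue arises if some intermediate eigenfunction satisfies $\phi_j(0)=0$: by Proposition \ref{Prop1} such an $E_j$ persists unchanged as an eigenvalue of $H$, so one simply reads the interlacing by replacing $E_{k-1}$ with the nearest lower eigenvalue of $H_0$ whose eigenfunction does not vanish at the origin, and the argument proceeds verbatim.
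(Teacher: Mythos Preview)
Your argument is essentially the same as the paper's: both compute the derivative of the spectral expansion to obtain strict monotonicity (the paper phrases it as $G_0(0,0|E)$ increasing, you as $\Phi$ decreasing), check the blow-up at the poles $E_{k-1},E_k$ and the limit as $E\to-\infty$, and invoke the intermediate value theorem. The only point the paper adds that you leave implicit is that for $k=0$ the condition $\phi_0(0)\neq 0$ is guaranteed \emph{a priori} by positivity of the ground state (Kato's inequality), which is what makes the claim ``always'' hold; otherwise your treatment, including the handling of intermediate nodes $\phi_j(0)=0$, is at least as careful as the paper's.
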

\begin{proof} Since the Green's function $G_0$ has poles at the eigenvalues $E=E_n$ of $H_0$ in the complex $E$ plane and has a branch cut along the generalized eigenvalues of $H_0$, it is differentiable everywhere except at its poles and along the branch cut. Then, by taking the derivative of $G_0(0,0|E)$ with respect to $E$ under the summation and integral sign, we formally obtain
\begin{eqnarray}
    \frac{d G_0(0,0|E)}{d E} = \sum_{n=0}^{\infty} \frac{|\phi_n(0)|^2}{(E_n-E)^2} + \int_{\Lambda} \frac{|\chi_{\lambda}(0)|^2}{(\lambda-E)^2} \; d \mu(\lambda) > 0 \;.
\end{eqnarray}
This implies that $G_0(0,0|E)$ is a monotonically increasing function of $E$ between its poles as well as between the largest eigenvalue below the continuum branch cut and  the infimum of the branch cut (which is typically zero). Moreover, by isolating the $n=k$  term in the sum, it is easy to see that 
\begin{eqnarray}
    \lim_{E \to E_{k}^{-}} \sum_{n=0}^{\infty} \frac{|\phi_n(0)|^2}{E_n-E} + \int_{\Lambda} \frac{|\chi_{\lambda}(0)|^2}{\lambda-E} \; d \mu(\lambda) & = &  \infty \;, \\
    \lim_{E \to E_{k}^{+}} \sum_{n=0}^{\infty} \frac{|\phi_n(0)|^2}{E_n-E} + \int_{\Lambda} \frac{|\chi_{\lambda}(0)|^2}{\lambda-E} \; d \mu(\lambda) & = &  -\infty \;,
\end{eqnarray}
for all $k=0, 1, 2, \cdots$, and by taking the limit under the summation and integral we have
\begin{eqnarray}
    \lim_{E \to -\infty} \sum_{n=0}^{\infty} \frac{|\phi_n(0)|^2}{E_n-E} + \int_{\Lambda} \frac{|\chi_{\lambda}(0)|^2}{\lambda-E} \; d \mu(\lambda) & = &  0^+ \;.
\end{eqnarray}
Then, if $\alpha>0$ it follows from the above results that the roots, say $E_{k}^{*}$, of the equation (\ref{rootsofPhi}) must be located at the points of intersection of $1/\alpha$  and $G_0(0,0|E)$, as shown in Figure \ref{fig:1}.
\begin{figure}[h!!]
    \centering
    \includegraphics[scale=0.3]{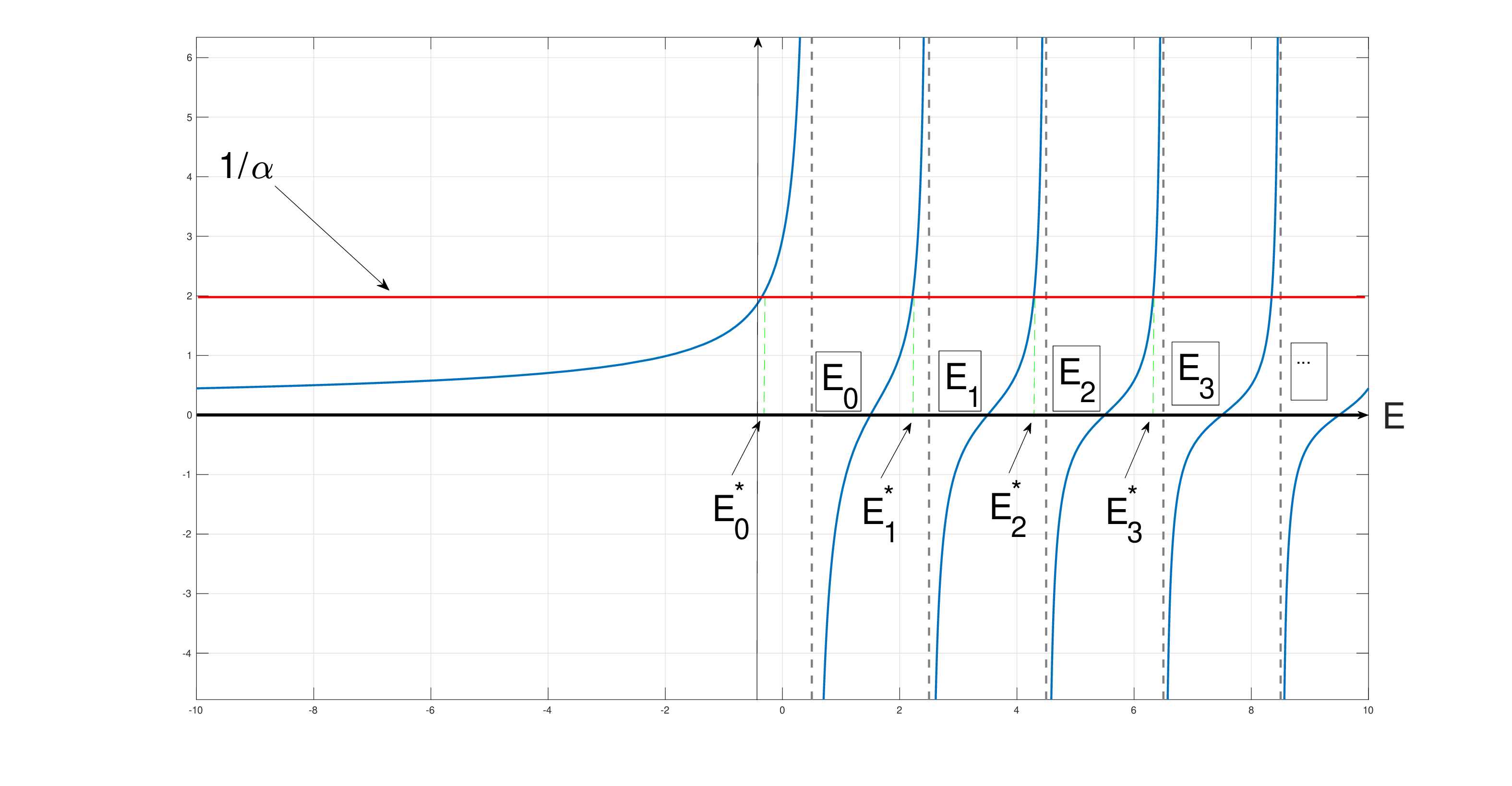}
    \caption{The graph of $1/\alpha$ and a typical behavior of the term $\sum_{n=0}^{\infty} \frac{|\phi_n(0)|^2}{E_n-E} + \int_{\Lambda} \frac{|\chi_{\lambda}(0)|^2}{\lambda-E} \; d \mu(\lambda)$ versus $E$ for $\alpha>0$. The intersection points give the bound state energies. Here $E_0, E_1, E_2, \ldots$ represents the bound state energies of $H_0$ and $E_{0}^{*}, E_{1}^{*}, E_{2}^{*}, \ldots$ represents the bound state energies of the Hamiltonian modified by $\delta$ interaction.}
    \label{fig:1}
\end{figure}
This shows that the new bound state energies $E_{k}^{*}$ are shifted downwards and interlaced between $E_{k-1}$ and $E_k$. Let $E_0$ be the ground state energy of $H_0$. In contrast to excited states, the new ground state energy can be as small as possible by choosing $\alpha$ sufficiently large. By the positivity of the ground state wave functions of Schr\"{o}dinger operators (thanks to Kato's inequality \cite{ReedSimonv4}), having no nodes $\phi_0(0) \neq 0$, it follows from Proposition \ref{Prop1} that new ground state energy $E_{0}^{*}$ is always less than the original one. 

\end{proof}

\begin{myremark}
If $\alpha<0$ (repulsive case), then the bound state energies of $H_0$ are shifted upward and interlaced as $E_k<E_{k}^{*}<E_{k+1}$ if $\phi_k(0) \neq 0$. This can be easily seen by moving the line $1/\alpha$ below the $E$ axis in Figure \ref{fig:1}. Otherwise, bound state energies of $H_0$ do not change.
\end{myremark}

\begin{myremark}
The above statements are still true if we consider the support of $\delta$ potential at $x=a$. In this case,  we have
\begin{eqnarray} \label{GreensfunctionNRdeltaa}
G(x,y|E)= G_0(x,y|E) + \frac{G_0(x,a|E) G_0(a,y|E)}{\frac{1}{\alpha}-G_0(a,a|E)} \;,
\end{eqnarray}
and the bound state energies under the addition of $\delta$ interaction do not change if $a$ is at one of the nodes of the bound state wave function of $H_0$, that is,
$\phi_k(a)=0$. Otherwise, the bound state energies are obtained from $\Phi(E)=\frac{1}{\alpha}-G_0(a,a|E)=0$.
\end{myremark}

\begin{myremark} It is easy to see from Figure \ref{fig:1} that the number of bound states increases by one if we add $\delta$ interaction to $H_0$. An extra pole is created below the ground state energy of $H_0$ and all the other eigenvalues are interlaced. Moreover, the bound state energy of $H_0$ is invariant under a particular configuration of $\delta$ interaction added to $H_0$, that is, if the support of the $\delta$ function is chosen to be at one of the nodes of the bound state wave function associated with the bound state energy $E_k$ of $H_0$, then this bound state energy $E_k$ does not change under the addition (perturbation) of $\delta$ interaction to $H_0$.   
\end{myremark}

\begin{myexample} We can illustrate what we have stated above by working out an exactly solvable case for $H_0$, whose spectrum includes both discrete and continuous parts. Consider the Schr\"{o}dinger operator associated with the reflectionless interaction $V$ given by  
\begin{eqnarray}
    (H_0 \psi)(x)= - \frac{\hbar^2}{2m} \frac{d^2}{dx^2} \psi(x) - \frac{\hbar^2}{2m} \kappa^2 \frac{N(N+1)}{\cosh^2 (\kappa x)} \psi(x)\;, \label{reflectionlesspotentialN}
\end{eqnarray}
where  $N \in \mathbb{N}$. In this case, the discrete spectrum is given by the set of eigenvalues $E_n= - \frac{\hbar^2 \kappa^2}{2 m} \left(N-n\right)^2$, where $n=0,1,2, \ldots N-1$. The number of bound states is finite and equal to $N$, corresponding eigenfunctions are given by 
\begin{eqnarray}
\phi_{n}^{(N)}(x)= \sqrt{\kappa} \left((N-n) \frac{(2N +1-n)!}{n!} \right)^{1/2} P_{N}^{n-N}(\tanh(\kappa x)) \;,
\end{eqnarray}
where $P_{\mu}^{\nu}(x)$ is Legendre function, defined by $P_{\nu}^{\mu}(x)=\frac{1}{\Gamma(1-\mu)} \left( \frac{1+x}{1-x}\right)^{1/2} {_2}F_1(-\nu, \nu+1;1-\mu;\frac{1-x}{2})$ in terms of hypergeometric functions ${_2}F_1$. In this example, the continuous spectrum is the positive real axis, i.e, $\sigma_c(H_0)=[0,\infty)$ and the generalized eigenfunctions are given by $\chi_{k}^{(N)}(x)= \left( \frac{k}{2 \sinh(\pi k/\kappa)}\right)^{1/2} P_{N}^{i k/\kappa}(\tanh(\kappa x))$ with the generalized eigenvalues $E_k= \frac{\hbar^2 k^2}{2m}$, with $k\in \mathbb{R}$.

For simplicity, we choose $N=1$. In this case, there is a single bound state  and the associated normalized eigenfunction, given by
\begin{eqnarray}
    E_0 & = & - \frac{\hbar^2 \kappa^2}{2m} \;, \\ 
    \phi_0(x) & = & \sqrt{\frac{\kappa}{2}} \frac{1}{\cosh(\kappa x)} \;.
\end{eqnarray}
Its generalized eigenfunction for this case ($N=1$) is given by
\begin{eqnarray}
\chi_E(x)= \sqrt{\frac{\kappa}{2 \pi}} e^{i k x} \left(\frac{ik- \kappa \tanh (\kappa x)}{\kappa +i k} \right) \;,
\end{eqnarray}
with the generalized eigenvalues $E=\frac{\hbar^2 k^2}{2m}$. The Green's function of this potential problem in appropriate units has been found in \cite{Grosche} and its explicit formula in accordance with the conventions and units that we use here is given as
\begin{eqnarray} & & \hskip-1cm
    G_0(x,y|E) = \frac{1}{\hbar} \left(\frac{m}{-2 E} \right)^{1/2} \exp \left(-\frac{|x-y| \sqrt{-2m E}}{\hbar}\right) - \frac{\kappa}{2 \cosh(\kappa x) \cosh(\kappa y)(E+\frac{\hbar^2 \kappa^2}{2m})} \nonumber \\ & & \hspace{2cm} \times  \Bigg[ 1- \left(1-\frac{\hbar \kappa}{\sqrt{-2m E}}\right) \cosh \left( \kappa|x-y| \left(1+ \frac{\sqrt{-2m E}}{\kappa \hbar}\right)\right) \Bigg] \;. \label{Greenforreflectionless}
\end{eqnarray}
There is an extra $\hbar$ factor between our convention for the Green's function and the one introduced in \cite{Grosche}.

Since $\phi_0$ has no node, it follows from Proposition \ref{Prop1} that the new eigenvalue due to the addition of $\delta$ interaction at the point $a$ to $H_0$ must satisfy the following equation 
\begin{eqnarray}
    \frac{1}{\alpha} - G(a,a|E) = \frac{1}{\alpha} -\frac{1}{\hbar} \left(\frac{m}{-2 E} \right)^{1/2} + \frac{\kappa}{2 \cosh^2(\kappa a)(E+\frac{\hbar^2 \kappa^2}{2m})}    \left( \frac{\hbar \kappa}{\sqrt{-2m E}}\right) =0 \;.
\end{eqnarray}
From this, one can see that $G_0$ has a simple pole at $E=-\frac{\hbar^2 \kappa^2}{2m}$ and it is singular near the point $E=0$, where the continuous spectrum begins.

It is easy to see from Figure \ref{fig:2} that
\begin{figure}[h!!]
    \centering
    \includegraphics[scale=0.4]{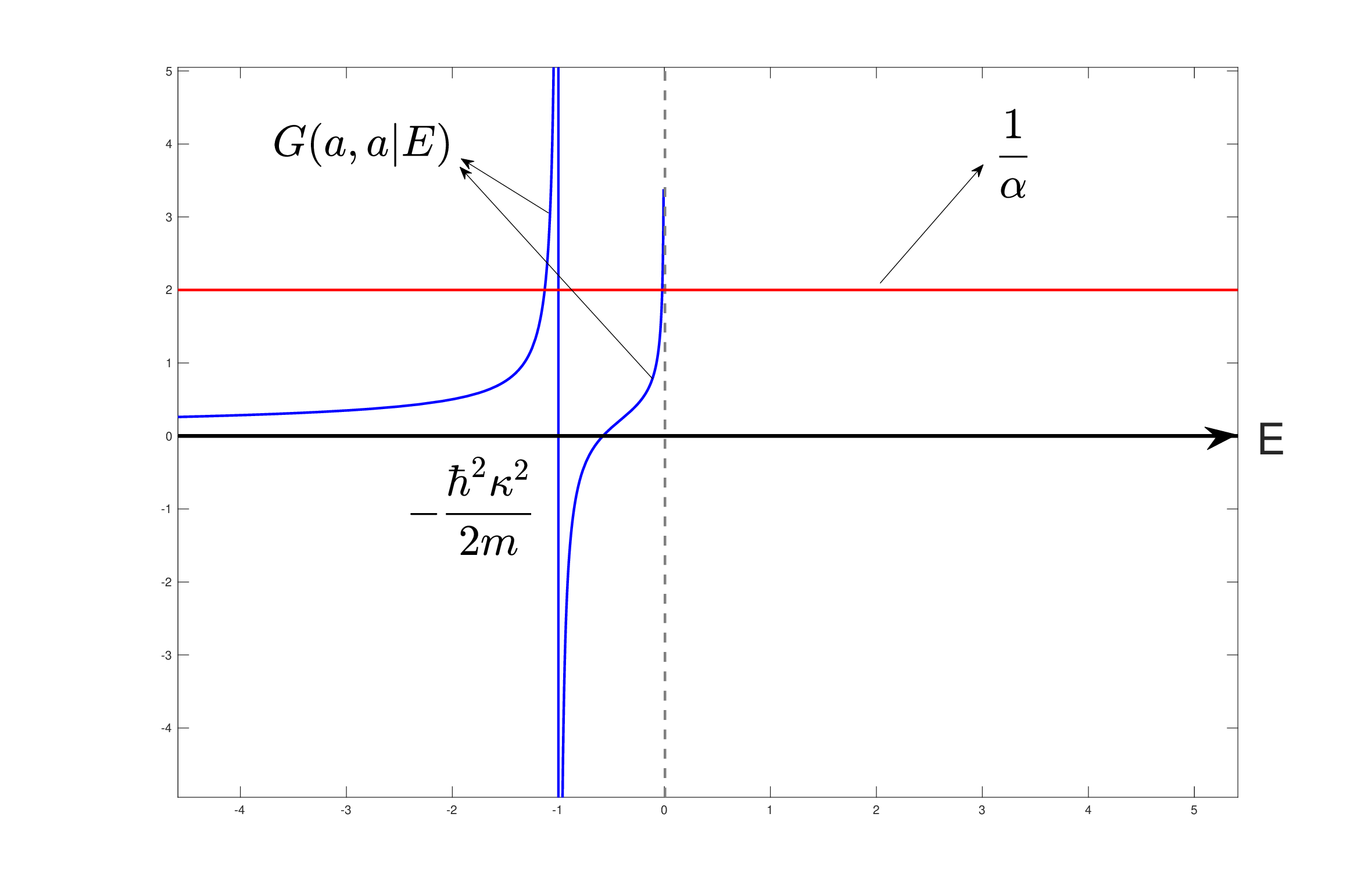}
    \caption{Graphical solution $E$ of the equation $\frac{1}{\alpha}  = G_0(a,a|E)$, where $G_0(a,a|E)= \frac{1}{\hbar} \left(\frac{m}{-2 E} \right)^{1/2} - \frac{\kappa}{2 \cosh^2(\kappa a)(E+\frac{\hbar^2 \kappa^2}{2m})}  \left( \frac{\hbar \kappa}{\sqrt{-2m E}}\right)$. Here we have chosen $\alpha=1/2$, $\hbar=2m=1$, $\kappa=1$, and $a=1$. Here the vertical asymptotes correspond to the bound state energies of the initial Hamiltonian $H_0$ or the point where the continuous spectrum of $H_0$ begins.}
    \label{fig:2}
\end{figure}
as long as $a \neq 0$, we have always two bound state energies, one above $-\frac{\hbar^2 \kappa^2}{2m}$ and one below $-\frac{\hbar^2 \kappa^2}{2m}$, as expected by our above analysis. It is important to notice that the Green function $G_0(a,a|E)$ blows up near its branch point at $E=0$ (where the continuous spectrum begins), which ensures the existence of two new bound states under the addition of $\delta$ interaction to the original problem. However, if $a=0$, then $G_0(0,0|E)$ vanishes at $E=0$. This implies that we have only one root and we have a single bound state energy below $-\frac{\hbar^2 \kappa^2}{2m}$. This is not a generic case as explained above and shown in Figure \ref{fig:1}. This case can be explained by the following symmetry argument. Initially, we have a reflectionless potential having $\mathbb{Z}_2$ symmetry, and it has  a single bound state with even parity. If we include $\delta$ interaction to this potential, we still keep $\mathbb{Z}_2$ symmetry if $a=0$. Then, the excited state must have an odd parity since the systems having such symmetries must have a definite symmetry and the excited state must be orthogonal to the ground state having even parity by the positivity \cite{ReedSimonv4}. This implies that the excited state wave function must vanish at $x=0$, which removes the $\delta$ interaction term in the formal Hamiltonian. However, the initial system $H_0$ has only a single bound state, namely the ground state, so the new system modified by $\delta$ interaction must have a single bound state energy level if $a=0$.

When $a\neq 0$, the bound state energies can be found explicitly and given by 
\begin{eqnarray}
    E_{0}^{*} & = & - \left(\sqrt{\frac{\hbar^2 \kappa^2}{2m}+ \frac{m \alpha^2}{8 \hbar^2}} + \sqrt{\frac{m \alpha^2}{8 \hbar^2}} \right)^2  \;, \\ 
    E_{1}^{*} & = & - \left(\sqrt{\frac{\hbar^2 \kappa^2}{2m}+ \frac{m \alpha^2}{8 \hbar^2}} - \sqrt{\frac{m \alpha^2}{8 \hbar^2}} \right)^2 \;.
\end{eqnarray}

\end{myexample}

\subsection{A Different Perspective for the Perturbative Estimates on the Bound State Energies}
\label{ADifferentPerspectiveforthePerturbativeEstimatesontheBoundStateEnergies}

We now develop a perturbative approximation in finding the eigenvalues of $H_0-\alpha \delta$ with a different perspective. In the standard perturbation theory \cite{LandauLifshitz, GalindoPascual2}, we basically start with the so-called Rayleigh-Schr\"{o}dinger series expansion for the bound state energies and wave functions and then substitute these into Schr\"{o}dinger equation and solve each expansion term recursively (which has been  discussed more rigorously in \cite{ReedSimonv4, Kato}). 

Here, we  follow a somewhat more heuristic approach and assume that $\alpha << 1$ and $\phi_k(0)\neq 0$. According to the previous analysis, new bound state energies will shift due to the addition of $\delta$ interaction to $H_0$. Let $E=E_{k}^* = E_k + \delta E_k$, where $E_k$ are the bound state energies of $H_0$ and $\delta E_k$ is the change in the bound state energy $E_k$. Then, according to Proposition \ref{Prop1} the new bound state energies $E_{k}^*$ are given by the solutions of 
\begin{eqnarray} \hskip-1cm
    & & \frac{1}{\alpha} - \sum_{n=0}^{\infty} \frac{|\phi_n(0)|^2}{E_n-(E_k+\delta E_k)} - \int_{\Lambda} \frac{|\chi_{\lambda}(0)|^2}{\lambda-(E_k+\delta E_k)} \; d \mu(\lambda) \nonumber \\ & & \hspace{1cm} = \frac{1}{\alpha} - \sum_{n\neq k} \frac{|\phi_n(0)|^2}{E_n-(E_k+\delta E_k)} + \frac{|\phi_k(0)|^2}{\delta E_k} - \int_{\Lambda} \frac{|\chi_{\lambda}(0)|^2}{\lambda-(E_k+\delta E_k)} \; d \mu(\lambda) = 0 \;.
\end{eqnarray}
This is an exact equation determining the bound state energies but it involves the eigenfunctions at $x=0$. If we  expand the terms in the summation and integral in the powers of $\delta E_k$ and multiply the equation by $\alpha \, \delta E_k$, we get
\begin{eqnarray} & & 
\delta E_k - \alpha \, \delta E_k \sum_{n\neq k} \frac{|\phi_n(0)|^2}{E_n-E_k} \left( 1+ \frac{\delta E_k}{E_n-E_k} +O(\delta E_{k}^{2})\right) 
\nonumber \\ & & - \alpha \, \delta E_k \int_{\Lambda} \frac{|\chi_{\lambda}(0)|^2}{\lambda-E_k} \; \left( 1+ \frac{\delta E_k}{\lambda-E_k} +O(\delta E_{k}^{2})\right)  d \mu(\lambda)  + \alpha |\phi_k(0)|^2 + O(\delta E_{k}^3) = 0 \;. \label{perturbativebsnonrenormcase}
\end{eqnarray}
Let us assume that $\delta E_k$ has a  power series in $\alpha$, that is, $\delta E_k =  E_{k}^{(1)} +   E_{k}^{(2)} + \cdots$, where $E_{k}^{(n)}$ corresponds to the change in the bound state energy of order $\alpha^n$. Then, solving $E_{k}^{(1)}$ and $E_{k}^{(2)}$ term by term, we obtain
\begin{eqnarray}
     E_{k}^{(1)} & = &  - \alpha |\phi_k(0)|^2 \;, \label{NRenergyperturbation1} \\   E_{k}^{(2)}  & = &  - \alpha^2 |\phi_k(0)|^2 \left( \sum_{n\neq k} \frac{|\phi_n(0)|^2}{E_n-E_k} + \int_{\Lambda} \frac{|\chi_{\lambda}(0)|^2}{\lambda-E_k}  d \mu(\lambda)  \right) \;,
\label{NRenergyperturbation2}\end{eqnarray}
which are consistent with classical first and second order formulas in non-degenerate perturbation theory \cite{GalindoPascual1, LandauLifshitz}, given by
\begin{eqnarray}
     E_{k}^{(1)} & = &  \langle \phi_k, -\alpha \delta \phi_k \rangle \label{perturbationenergyformula1}  \\    E_{k}^{(2)} & = &  - \sum_{n\neq k} \frac{|\langle \phi_k, -\alpha \delta \phi_n \rangle|^2}{E_n-E_k}
    - \int_{\Lambda} \frac{|\langle \chi_{\lambda}, -\alpha \delta \phi_k \rangle|^2}{\lambda-E_k}  d \mu(\lambda) \;.
\label{perturbationenergyformula2}\end{eqnarray}
The important point to note here is that we are looking for the solution $\delta E_k$ of (\ref{perturbativebsnonrenormcase}) as the formal power series in $\alpha$ 
by assuming both $\alpha$ and $\delta E_k$ are small so that it is sufficient to use regular perturbation theory \cite{Murdock}. Let us summarize what we have found as the following Proposition:
\begin{myproposition}
Let $\phi_k$ be the eigenfunctions associated with the eigenvalues $E_k$ of $H_0$ and $\chi_{\lambda}$ be the generalized eigenfunctions of $H_0$. Then, under the addition of $\delta$ interaction with the coupling constant $-\alpha$, the change in the bound state energies up to the first and second order in $\alpha$ are formally given by the equations (\ref{NRenergyperturbation1}) and (\ref{NRenergyperturbation2}).   
\end{myproposition}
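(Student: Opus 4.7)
The proof will essentially formalize the derivation that immediately precedes the proposition statement. The plan is to start from the exact equation $\Phi(E_k^*)=0$ of Proposition \ref{Prop1}, substitute the bilinear expansion (\ref{greenfuncexpansion}) for $G_0(0,0|E)$, isolate the singular term corresponding to index $k$, and then expand the remaining regular part in powers of $\delta E_k := E_k^*-E_k$.

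First I would write the exact condition, after isolating the $n=k$ term,
\begin{equation*}
\frac{1}{\alpha} - \sum_{n\neq k}\frac{|\phi_n(0)|^2}{E_n-E_k-\delta E_k} + \frac{|\phi_k(0)|^2}{\delta E_k} - \int_{\sigma_c(H_0)}\frac{|\chi_\lambda(0)|^2}{\lambda-E_k-\delta E_k}\,d\mu(\lambda) = 0.
\end{equation*}
Multiplying through by $\alpha\,\delta E_k$ removes the apparent pole at $\delta E_k=0$ and yields precisely (\ref{perturbativebsnonrenormcase}) after performing a geometric series expansion of each denominator in $\delta E_k/(E_n-E_k)$ and $\delta E_k/(\lambda-E_k)$. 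The hypotheses $\phi_k(0)\neq 0$ and the spectral gap between $E_k$ and $E_{k\pm1}$ (together with the location of $E_k$ below $\sigma_c(H_0)$ if applicable) make these denominators nonzero, so the geometric expansion is legitimate for sufficiently small $\delta E_k$.

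Next I would posit a formal power series $\delta E_k = E_k^{(1)} + E_k^{(2)} + \cdots$ with $E_k^{(n)} = O(\alpha^n)$ and substitute into the expanded equation. Matching coefficients of $\alpha$ on both sides is straightforward: the $O(\alpha)$ balance gives immediately $E_k^{(1)} = -\alpha|\phi_k(0)|^2$, recovering (\ref{NRenergyperturbation1}). Substituting this back and collecting $O(\alpha^2)$ terms, the cross terms produce exactly
\begin{equation*}
E_k^{(2)} = -\alpha^2|\phi_k(0)|^2\left(\sum_{n\neq k}\frac{|\phi_n(0)|^2}{E_n-E_k} + \int_{\sigma_c(H_0)}\frac{|\chi_\lambda(0)|^2}{\lambda-E_k}\,d\mu(\lambda)\right),
\end{equation*}
which is (\ref{NRenergyperturbation2}). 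One can also verify consistency with the standard Rayleigh--Schr\"odinger formulas (\ref{perturbationenergyformula1})--(\ref{perturbationenergyformula2}) by recognizing that $\langle\phi_n,-\alpha\delta\phi_k\rangle = -\alpha\overline{\phi_n(0)}\phi_k(0)$ and similarly for the continuous part.

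The main obstacle, as the authors themselves flag by calling their argument ``heuristic,'' lies not in the algebra but in justifying the formal manipulations. In particular, the sums and integrals $\sum_{n\neq k}|\phi_n(0)|^2/(E_n-E_k)$ and $\int_{\sigma_c(H_0)}|\chi_\lambda(0)|^2/(\lambda-E_k)\,d\mu(\lambda)$ need not converge absolutely for a generic $H_0$, and interchanging the geometric expansion with the sum/integral requires a uniformity argument. The cleanest rigorous path would be to appeal to the fact that $G_0(0,0|E)$ is known to be analytic in $E$ away from $\sigma(H_0)$, write $G_0(0,0|E) = |\phi_k(0)|^2/(E_k-E) + R_k(E)$ with $R_k$ analytic in a neighborhood of $E_k$, Taylor-expand $R_k$ about $E=E_k$, and then apply the implicit function theorem to $\alpha\,\delta E_k(1/\alpha - R_k(E_k+\delta E_k)) = |\phi_k(0)|^2\alpha$ to obtain an analytic solution $\delta E_k(\alpha)$. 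The coefficients in its Taylor expansion at $\alpha=0$ are then identified termwise with the formal expressions above; the awkward series representations of the derivatives of $R_k$ serve only as formal bookkeeping and do not need to converge in their own right.
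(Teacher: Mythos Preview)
Your proposal is correct and follows essentially the same route as the paper: both start from the exact condition $\Phi(E_k+\delta E_k)=0$ of Proposition~\ref{Prop1}, isolate the $n=k$ term, multiply by $\alpha\,\delta E_k$, expand the remaining denominators as geometric series, and then match powers of $\alpha$ in a formal ansatz $\delta E_k=\sum_n E_k^{(n)}$ to read off (\ref{NRenergyperturbation1}) and (\ref{NRenergyperturbation2}). Your closing remark about recasting the argument via analyticity of $R_k(E)=G_0(0,0|E)-|\phi_k(0)|^2/(E_k-E)$ and the implicit function theorem goes a step beyond the paper's purely formal treatment, but it is a justification of the same computation rather than a different method.
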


\subsection{A Different Perspective for the Perturbative Estimates on the Bound State Wave Function}
\label{ADifferentPerspectiveforthePerturbativeEstimatesontheBoundStateWaveFunction}

Using a contour integral of the resolvent $R(E)=(H-E)^{-1}$ around each simple eigenvalue $E_{k}^{*}$, we can find the projection operator onto the eigenspace associated with the eigenvalue $E_{k}^{*}$,
\begin{eqnarray}
    \mathbb{P}_k = - \frac{1}{2\pi i} \oint_{\Gamma_k} R(E) \; d E \;,
\end{eqnarray}
where $\Gamma_k$ is the closed contour around each simple pole $E_{k}^{*}$, or equivalently 
\begin{eqnarray}
    \psi_k(x) \overline{\psi_k(y)} = - \frac{1}{2\pi i} \oint_{\Gamma_k} G(x,y|E) \; d E \;.
\end{eqnarray}
From the explicit expression of the Green's function (\ref{GreensfunctionNR}) and residue theorem, we obtain 
\begin{eqnarray}
    \psi_k(x)=  \frac{G_0(x,0|E_{k}^{*})}{ \left(\frac{d G_0(0,0|E)}{d E}\bigg|_{E=E_{k}^{*}}\right)^{1/2}} \;. \label{boundstatewavefunctionexact}
\end{eqnarray}
Using the eigenfunction expansion of $G_0$, and $E_{k}^{*}=E_k + \delta E_k$, we get 
\begin{eqnarray}
 \psi_k(x)= \frac{\left( \sum_{n=0}^{\infty} \frac{\phi_n(x) \overline{\phi_n(0)}}{E_n - (E_k +\delta E_k)} + \int_{\Lambda} \frac{\chi_{\lambda}(x) \overline{\chi_{\lambda}(0)}}{\lambda-(E_k+\delta E_k)}  d \mu(\lambda)  \right)}{\left( \sum_{n=0}^{\infty} \frac{|\phi_n(0)|^2}{\left(E_n - (E_k +\delta E_k)
 \right)^2} + \int_{\Lambda} \frac{|\chi_{\lambda}(0)|^2}{(\lambda-(E_k+\delta E_k))^2}  d \mu(\lambda) \right)^{1/2}} \;.    
\end{eqnarray}
If we split the $n=k$  term in the sums and integrals, and then expand the above terms in powers of $\delta E_k$, 
then the wave function becomes
\begin{eqnarray} & & \hskip-1cm
 \psi_k(x) =  \left[1 + \frac{\delta E_{k}^2}{|\phi_k(0)|^2} \sum_{n\neq k} 
   \frac{|\phi_n(0)|^2}{(E_n-E_{k})^2} + \frac{\delta E_{k}^2}{|\phi_k(0)|^2} \int_{\Lambda} \frac{|\chi_{\lambda}(0)|^2}{(\lambda-E_k)^2}  d \mu(\lambda) + O(\delta E_{k}^3) \right]^{-1/2}   \nonumber \\ & & \hspace{2cm} \times \Bigg( - \frac{\phi_k(x) \overline{\phi_k(0)}}{|\phi_k(0)|} +  
    \left( \sum_{n \neq k} \frac{\phi_n(x) \overline{\phi_n(0)}}{(E_n - E_k)|\phi_k(0)|} \left(\delta E_k + \frac{\delta E_{k}^2}{(E_n-E_k)} + O(\delta E_{k}^3) \right) \right) \nonumber \\ & & + \left(\int_{\Lambda} \frac{\chi_{\lambda}(x) \overline{\chi_{\lambda}(0)}}{(\lambda-E_k)|\phi_k(0)|}  \left(\delta E_k + \frac{\delta E_{k}^2}{(\lambda-E_k)} + O(\delta E_{k}^3) \right)  d \mu(\lambda) \right) + O(\delta E_{k}^{3})\;,
\end{eqnarray}
where we have assumed $|\frac{\delta E_k}{(E_n-E_k)}|<1$ and $|\frac{\delta E_k}{(\lambda-E_k)}|<1$. Keeping the terms in $\delta E_k$ up to the second order, we obtain
\begin{eqnarray} & & \hskip-1cm
 \psi_k(x) = \phi_k(x) e^{-i\theta_k+i \pi} + \sum_{n \neq k} \frac{\phi_n(x) \overline{\phi_n(0)}}{|\phi_k(0)|} \frac{E_{k}^{(1)}}{E_n-E_k} + \int_{\Lambda} \frac{\chi_{\lambda}(x) \overline{\chi_{\lambda}(0)}}{|\phi_k(0)|} \frac{E_{k}^{(1)}}{\lambda- E_k}  d \mu(\lambda) \nonumber \\ & & \hspace{2cm} -  
   \frac{1}{2} \phi_k(x) e^{-i\theta_k+i \pi} \left(\sum_{n\neq k} 
   \frac{|\phi_n(0)|^2}{|\phi_k(0)|)^2} \frac{(E_{k}^{(1)})^2}{(E_n-E_k)^2} + \int_{\Lambda} \frac{|\chi_{\lambda}(0)|^2}{|\phi_k(0)|^2} \frac{(E_{k}^{(1)})^2}{(\lambda-E_k)^2} d \mu(\lambda) \right) 
    \nonumber \\ & &   +  \sum_{n \neq k} \frac{\phi_n(x) \overline{\phi_n(0)}}{(E_n - E_k)|\phi_k(0)|} \left(E_{k}^{(2)} + \frac{(E_{k}^{(1)})^2}{(E_n-E_k)} \right) 
   \nonumber \\ & &   \hspace{2cm} + \left(\int_{\Lambda} \frac{\chi_{\lambda}(x) \overline{\chi_{\lambda}(0)}}{(\lambda-E_k)|\phi_k(0)|}  \left(E_{k}^{(2)} + \frac{(E_{k}^{(1)})^{2}}{(\lambda-E_k)} \right)  d \mu(\lambda) \right) + O(\delta E_{k}^{3})\;, \label{wavefunctionnonrenormcase}
\end{eqnarray}
where $\phi_k(0)=|\phi_k(0)| e^{i \theta_k}$. Using the first order and second order results (\ref{perturbationenergyformula1}) and (\ref{perturbationenergyformula2}) for the bound state energies, we finally obtain the bound state wave function for each order:
\begin{eqnarray} \hskip-1cm
 \psi_{k}^{(0)}(x) & = &  e^{-i \theta_k + i \pi} \phi_{k}(x) \label{zeroorderperturbationwavefunction} \\  \psi_{k}^{(1)}(x) & = & \alpha \; \phi_k(0) e^{-i \theta_k + i \pi} \sum_{n \neq k} \frac{\phi_n(x) \overline{\phi_n(0)}}{E_n-E_k}  + \alpha \; \phi_k(0) e^{-i \theta_k + i \pi} \int_{\Lambda} \frac{\chi_{\lambda}(x) \overline{\chi_{\lambda}(0)}}{\lambda- E_k}  d \mu(\lambda)  \label{firstorderperturbationwavefunction} \\  \psi_{k}^{(2)}(x) & = & - \frac{\alpha^2}{2} \phi_k(x) e^{-i\theta_k+i \pi} |\phi_k(0)|^2 \left(\sum_{n\neq k} \frac{|\phi_n(0)|^2}{(E_n-E_k)^2} + \int_{\Lambda}  \frac{|\chi_{\lambda}(0)|^2}{(\lambda-E_k)^2} d \mu(\lambda) \right) 
    \nonumber \\ & &  \hskip-1cm +  \alpha^2 \phi_k(0) e^{-i \theta_k +i \pi}  \sum_{n \neq k} \frac{\phi_n(x) \overline{\phi_n(0)}}{(E_n - E_k)} \Bigg( \sum_{m\neq k} \frac{|\phi_m(0)|^2}{E_m-E_k} + \int_{\Lambda} \frac{|\chi_{\lambda}(0)|^2}{\lambda-E_k}  d \mu(\lambda)   - \frac{ |\phi_k(0)|^2}{(E_n-E_k)} \Bigg)  \nonumber \\ & &   \hskip-2.5cm
   +  \alpha^2  \phi_k(0) e^{-i \theta_k + i \pi}  \int_{\Lambda} \frac{\chi_{\lambda}(x) \overline{\chi_{\lambda}(0)}}{(\lambda-E_k)}   \Bigg(\sum_{m\neq k} \frac{|\phi_m(0)|^2}{E_m-E_k} + \int_{\Lambda} \frac{|\chi_{\lambda'}(0)|^2}{\lambda'-E_k}  d \mu(\lambda')  - \frac{|\phi_k(0)|^2}{\lambda -E_k}  \Bigg)   d \mu(\lambda) \;. \label{secondorderperturbationwavefunction}
\end{eqnarray}

\begin{myremark}
According to standard regular perturbation theory results, the normalized wave functions for the bound states up to second order in $\alpha$ are given by
\begin{eqnarray}
 \psi_{k}^{(0)}(x) & = &  \phi_{k}(x) \\  \psi_{k}^{(1)}(x) & = & \sum_{n \neq k} \frac{\langle \phi_n, -\alpha \delta \phi_k \rangle}{E_k-E_n} \phi_n(x)  + \int_{\Lambda} \frac{\langle \chi_{\lambda}, -\alpha \delta \phi_k \rangle}{\lambda- E_k} \chi_{\lambda}(x) d \mu(\lambda)  \\  \psi_{k}^{(2)}(x) & = &
 \sum_{n \neq k} \sum_{m \neq k} \frac{\langle \phi_n, - \alpha \delta \phi_m \rangle \langle \phi_m, - \alpha \delta \phi_k \rangle}{(E_k-E_n)(E_k-E_m)} \phi_n(x)
 - \sum_{n \neq k} \frac{\langle \phi_n, - \alpha \delta \phi_k \rangle \langle \phi_k, -\alpha \delta \phi_k \rangle}{(E_k-E_n)^2} \phi_n(x) \nonumber \\ & - & \frac{1}{2} \sum_{n \neq k} \frac{\langle \phi_k, -\alpha \delta \phi_n \rangle \langle \phi_n, -\alpha \delta \phi_k \rangle}{(E_k-E_n)^2} \phi_k(x) \nonumber \\ & + &  \int_{\Lambda} \left( \sum_{n \neq k} \frac{\langle \phi_n, - \alpha \delta \chi_{\lambda} \rangle \langle \chi_{\lambda}, - \alpha \delta \phi_k \rangle}{(E_k-E_n)(E_k-\lambda)} \phi_n(x) \right) d \mu(\lambda) \nonumber \\ & + &  \int_{\Lambda} \left( \sum_{m \neq k} \frac{\langle \chi_{\lambda}, - \alpha \delta \phi_m \rangle \langle \phi_m, - \alpha \delta \phi_k \rangle}{(E_k-E_m)(E_k-\lambda)} \chi_{\lambda}(x) \right) d \mu(\lambda) \nonumber \\ & + &  \int_{\Lambda} \left(\int_{\Lambda} \frac{\langle \chi_{\lambda}, - \alpha \delta \chi_{\lambda'} \rangle \langle \chi_{\lambda'}, - \alpha \delta \phi_k \rangle}{(E_k-\lambda)(E_k-\lambda')} \chi_{\lambda}(x) d \mu(\lambda') \right) d \mu(\lambda) \nonumber \\ & & \hskip-2.3cm -  \int_{\Lambda}  \frac{\langle \chi_{\lambda}, - \alpha \delta \phi_k \rangle \langle \phi_k, -\alpha \delta \phi_k \rangle}{(E_k-\lambda)^2} d \mu(\lambda) \; \chi_{\lambda}(x) - \frac{1}{2} \int_{\Lambda}  \frac{\langle \phi_k, - \alpha \delta \chi_{\lambda} \rangle \langle \chi_{\lambda}, -\alpha \delta \phi_k \rangle}{(E_k-\lambda)^2} d \mu(\lambda) \; \phi_k(x) \;.
\end{eqnarray}
These are completely consistent with our results up to a phase factor $e^{-i \theta_k + i \pi}$. (Second order perturbation result for the normalized wave function only in the discrete case has been given in \cite{Solyom}, but clearly it can be generalized as we have stated here). 
\end{myremark}
Let us summarize our findings as 
\begin{myproposition} The bound state wave functions $\psi_k(x)$ for the Hamiltonian $H_0$ modified by the $\delta$ interaction are given by (\ref{boundstatewavefunctionexact}) in terms of the associated bound state energies $E_{k}^{*}$, which are the solutions of $\frac{1}{\alpha}-G_0(0,0|E)=0$. The expansion of the bound state wave function $\psi_k(x)$ in terms of the coupling constants $\alpha$ up to second order are given by (\ref{zeroorderperturbationwavefunction}), (\ref{firstorderperturbationwavefunction}), and (\ref{secondorderperturbationwavefunction}).      
\end{myproposition}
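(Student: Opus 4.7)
The plan is to obtain the exact formula (\ref{boundstatewavefunctionexact}) from the projection-operator identity and then systematically expand in powers of $\alpha$, matching orders with the already-established energy shifts.

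First I would justify the contour-integral representation $\mathbb{P}_k = -\frac{1}{2\pi i}\oint_{\Gamma_k} R(E)\,dE$ by noting that, since $H$ is self-adjoint and $E_k^*$ is a simple isolated eigenvalue (by Proposition \ref{Prop1} together with Lemma \ref{Lemma1}), the Riesz projection picks out exactly the one-dimensional eigenspace. Translating to kernels gives $\psi_k(x)\overline{\psi_k(y)} = -\tfrac{1}{2\pi i}\oint_{\Gamma_k} G(x,y|E)\,dE$. Then I would plug in the Krein formula (\ref{GreensfunctionNR}) and observe that only the second term contributes a pole at $E_k^*$, because $G_0$ is analytic there (by Proposition \ref{Prop1}). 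Computing the residue via l'H\^opital's rule at the simple zero $E_k^*$ of $\Phi$ yields
\begin{equation*}
\psi_k(x)\overline{\psi_k(y)} \;=\; \frac{G_0(x,0|E_k^*)\,G_0(0,y|E_k^*)}{\frac{d G_0(0,0|E)}{dE}\big|_{E=E_k^*}},
\end{equation*}
and identifying this rank-one tensor structure gives (\ref{boundstatewavefunctionexact}) up to an overall phase that I would absorb into the convention $\phi_k(0)=|\phi_k(0)|e^{i\theta_k}$.

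Next I would insert the bilinear expansion (\ref{greenfuncexpansion}) for $G_0(x,0|E_k^*)$ and its derivative, then set $E_k^*=E_k+\delta E_k$. The key idea is to isolate the $n=k$ term in each sum; in the numerator this contributes $\phi_k(x)\overline{\phi_k(0)}/(-\delta E_k)$, while in the denominator it contributes $|\phi_k(0)|^2/(\delta E_k)^2$, so the $1/\delta E_k$ singularities cancel when one takes the square root in the denominator. The remaining terms are regular at $\delta E_k=0$, and I would expand each factor $(E_n-E_k-\delta E_k)^{-1}$ and $(\lambda-E_k-\delta E_k)^{-1}$ as geometric series in $\delta E_k/(E_n-E_k)$ and $\delta E_k/(\lambda-E_k)$ respectively, which converges in a neighborhood by the discreteness of $\sigma_d(H_0)$ and the separation of $E_k$ from $\sigma_c(H_0)$.

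Finally I would organize powers of $\alpha$ by substituting $\delta E_k = E_k^{(1)}+E_k^{(2)}+O(\alpha^3)$ with $E_k^{(1)}=-\alpha|\phi_k(0)|^2$ and $E_k^{(2)}$ given by (\ref{NRenergyperturbation2}), then expand $(1+X)^{-1/2}=1-\tfrac12 X+O(X^2)$ for the denominator. Collecting terms of order $\alpha^0$, $\alpha^1$, $\alpha^2$ and using $|\phi_k(0)|^2 = \phi_k(0)\overline{\phi_k(0)}$ reproduces exactly (\ref{zeroorderperturbationwavefunction})--(\ref{secondorderperturbationwavefunction}); the phase factor $e^{-i\theta_k+i\pi}$ emerges from writing $-\overline{\phi_k(0)}/|\phi_k(0)|$ in polar form, reflecting the freedom in choosing the overall phase of the eigenfunction. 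The main obstacle will be bookkeeping: one must carefully track which second-order contribution comes from the expansion of the normalization denominator versus from the $O(\delta E_k^2)$ correction inside the numerator, since these two sources produce structurally different terms (one proportional to $\phi_k$, the others to $\phi_n$ and $\chi_\lambda$), and matching them against the standard Rayleigh--Schr\"odinger expressions in the subsequent remark is where sign and combinatorial errors most easily creep in.
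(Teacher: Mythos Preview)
Your proposal is correct and follows essentially the same route as the paper: contour integral of the resolvent, residue at the simple zero of $\Phi$ to obtain (\ref{boundstatewavefunctionexact}), then eigenfunction expansion with the $n=k$ term isolated, geometric-series expansion in $\delta E_k$, and substitution of the already-derived energy shifts to collect orders in $\alpha$. Your added remarks on the Riesz projection, the l'H\^opital step, and the origin of the phase $e^{-i\theta_k+i\pi}$ make explicit what the paper leaves implicit, but the underlying computation is identical.
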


\section{Singular Modifications by $\delta$ Interactions }
\label{ModificationsbydeltaPotentialsinSingularCases}

When the co-dimension (dimension of the space - dimension of the support of the $\delta$ interaction) is greater than one (e.g., point $\delta$ interaction in two and three dimensions, $\delta$ interaction supported by a curve in three dimensions), we need to define $\delta$ interaction by a renormalization procedure. The reason for this is essentially based on the singularity of the Green's function for free Hamiltonians $H_0$ in two and three dimensions. The history of the subject is rather rich and there has been a vast amount of material in the physics literature, see e.g., \cite{Huang, Jackiw, GosdzinskyTarrach, MeadGodines, ManuelTarrach}. The subject has been also discussed  thoroughly from a more mathematical point of view in classic monographs \cite{Albeverio2012solvable, AlbeverioKurasov} as well as in a more recent work \cite{allesandro}. Here we now consider the Schr\"{o}dinger operators $H_0$ with the same assumption discussed before and we study the spectrum of $H_0$ modified by  $\delta$ interactions.

It is useful to express the Green's function $G_0$ in terms of the heat kernel $K_t(x,y)$ associated with the operator $H_0=-\frac{\hbar^2}{2m} \Delta + V$, given by 
\begin{eqnarray}
    G_0(x,y|E)= \int_{0}^{\infty} K_t(x,y) e^{t E} dt \;, \label{intrepgreen}
\end{eqnarray}
where $\Real(E)<0$ and $H_0 K_t(x,y)= \frac{\partial}{\partial t}K_t(x,y)$ and defined for other values of $E$ in the complex $E$ plane through analytical continuation. We note  that the first term in the short time asymptotic expansion of the diagonal heat kernel for any self-adjoint elliptic second order differential operator \cite{Gilkey} in $d$ dimensions, is given by
\begin{eqnarray}
 K_t(x,x) \sim  t^{-d/2} \;. \end{eqnarray}
This gives rise to the divergence near $t=0$ in the diagonal part $G_0(x,x|E)$:  
\begin{eqnarray}
    \int_{0}^{\infty} \frac{e^{-t|E|}}{t^{d/2}} \; dt \;,
\end{eqnarray}
for $d=2,3$. In order to make sense of such singular interactions, one must first introduce a cut-off $\epsilon>0$ and regularize the Hamiltonian. This could be done  replacing the $\delta$ interaction by the heat kernel $K_{\epsilon/2}(x,0)$, which converges to $\delta(x)$ as $\epsilon \to 0$ in the distributional sense. Then, we make the coupling constant $\alpha$ dependent on the cut-off in such a way that the regularized Hamiltonian has a non-trivial limit (in the norm resolvent sense) as we remove the cut-off. A natural choice for the coupling constant is given by  
\begin{eqnarray}
 \frac{1}{\alpha(\epsilon)} = \frac{1}{\alpha_R(M)} + \int_{\epsilon}^{\infty} K_t(0,0) e^{t M} d t \;,
\end{eqnarray}
where $M$ is the renormalization scale and could be
eliminated in favor of a physical parameter by imposing the
renormalization condition. For instance, $M$, and therefore $\alpha_R$, can be related to the bound state
energy of the particle moving under the addition of $\delta$ interaction to $H_0$, say $-\mu^2$ (this requirement, in general, leads to a flow in the space of coupling constants $\alpha_R$ \cite{pointinteractionsonmanifolds2}).  A special choice is to set ${1\over\alpha_R}=0$ while demanding $M=-\mu^2$, which is particularly convenient for bound state calculations. For our purposes, we use $\alpha_R$ and moreover set $M=-\mu^2$ (thinking of a bound state below $E_0$) for clarity of notation (The single parameter dependence of the model has been explicitly shown in Appendix B).

Then, taking the formal limit as $\epsilon \to 0$, we obtain the integral kernel of the resolvent or Green's function, given by 
\begin{eqnarray}
G(x,y|E)= G_0(x,y|E) + \frac{G_0(x,0|E) G_0(0,y|E)}{\frac{1}{\alpha_R}- \sum_{n=0}^{\infty} \frac{|\phi_n(0)|^2(E+\mu^2)}{(E_n-E)(E_n+\mu^2)}- \int_{\Lambda} \frac{|\chi_{\lambda}(0)|^2 (E+\mu^2)}{(\lambda+\mu^2)(\lambda-E)} \; d \mu(\lambda)} \;.
\end{eqnarray}
This is exactly in the same form as the Krein's formula given by (\ref{GreensfunctionNR}), where the form of the function $\Phi$ here is given by 
\begin{eqnarray}
   \Phi(E) & = & \frac{1}{\alpha_R} + \int_{0}^{\infty} K_t(0,0) \left( e^{-t\mu^2}- e^{t E} \right) \; d t \nonumber \\ & = &  \frac{1}{\alpha_R}- \sum_{n=0}^{\infty} \frac{|\phi_n(0)|^2(E+\mu^2)}{(E_n-E)(E_n+\mu^2)}- \int_{\Lambda} \frac{|\chi_{\lambda}(0)|^2 (E+\mu^2)}{(\lambda+\mu^2)(\lambda-E)} \; d \mu(\lambda) \;.
\end{eqnarray}
Here we have used the eigenfunction expansion of the heat kernel. Such  specific examples are examined in \cite{Grosche2} in the context of path integrals in two and three dimensions. However, there is no  explicit derivation in this work,  showing that the poles of the free resolvent are cancelled in the final expression.

Following the same line of argument developed for the one-dimensional case, we can now explicitly show how the poles of the Hamiltonian $H_0$ disappear with the addition of $\delta$ interaction under the same assumptions about the spectrum of $H_0$ as in the one dimensional case.

To simplify our arguments, we assume a purely discrete spectrum, the generalization to include a continuum with generalized eigenfunctions is fairly straightforward. We again split the term in the eigenfunction expansion of the Green's function associated with the isolated simple eigenvalue $E_k$ of $H_0$:
\begin{eqnarray} & & \hskip-1cm
    G(x,y|E)= \sum_{n \neq k} \frac{\phi_n(x) \overline{\phi_n(y)}}{E_n-E} + \frac{\phi_k(x) \overline{\phi_k(y)}}{E_k-E} + \frac{ \left(\sum_{n \neq k} \frac{\phi_n(x) \overline{\phi_n(0)}}{E_n-E} \right) \left( \sum_{n \neq k} \frac{\phi_n(0) \overline{\phi_n(y)}}{E_n-E}\right)}{\frac{1}{\alpha_R}-\sum_{n \neq k} \frac{|\phi_n(0)|^2(E+\mu^2)}{(E_n-E)(E_n+\mu^2)} - \frac{|\phi_k(0)|^2(E+\mu^2)}{(E_k-E)(E_k+\mu^2)}}  \nonumber \\ & & \hspace{1cm}+  \frac{\left(\sum_{n \neq k} \frac{\phi_n(x) \overline{\phi_n(0)}}{E_n-E} \right) \left(\frac{\phi_k(0) \overline{\phi_k(y)}}{E_k-E}\right)}{\frac{1}{\alpha_R}-\sum_{n \neq k} \frac{|\phi_n(0)|^2(E+\mu^2)}{(E_n-E)(E_n+\mu^2)} - \frac{|\phi_k(0)|^2(E+\mu^2)}{(E_k-E)(E_k+\mu^2)}} 
    \nonumber \\ & & \hspace{2cm} + \frac{ \left(\frac{\phi_k(x) \overline{\phi_k(0)}}{E_k-E} \right) \left( \sum_{n \neq k} \frac{\phi_n(0) \overline{\phi_n(y)}}{E_n-E}\right)}{\frac{1}{\alpha_R}-\sum_{n \neq k} \frac{|\phi_n(0)|^2(E+\mu^2)}{(E_n-E)(E_n+\mu^2)} - \frac{|\phi_k(0)|^2(E+\mu^2)}{(E_k-E)(E_k+\mu^2)}} \nonumber \\ & & \hspace{3cm} + \frac{ \left(\frac{\phi_k(x) \overline{\phi_k(0)}}{E_k-E} \right) \left( \frac{\phi_k(0) \overline{\phi_k(y)}}{E_k-E}\right)}{\frac{1}{\alpha_R}-\sum_{n \neq k} \frac{|\phi_n(0)|^2(E+\mu^2)}{(E_n-E)(E_n+\mu^2)} - \frac{|\phi_k(0)|^2(E+\mu^2)}{(E_k-E)(E_k+\mu^2)}} \;.
\end{eqnarray}
If we combine the second and the last term in the above expression, we obtain
\begin{eqnarray} & & \hskip-1cm
    G(x,y|E)= \frac{\phi_k(x) \overline{\phi_k(y)}}{E_k-E}  \left(1- \left(1- \frac{(E_k-E)}{|\phi_k(0)|^2} \left(\frac{1}{\alpha_R}-\sum_{n \neq k} \frac{|\phi_n(0)|^2}{E_n-E} + \frac{|\phi_k(0)|^2}{E_k+\mu^2}\right)\right)^{-1}\right) \nonumber \\ & & \hspace{1cm} + \sum_{n \neq k} \frac{\phi_n(x) \overline{\phi_n(y)}}{E_n-E} + (E_k-E) \frac{ \left(\sum_{n \neq k} \frac{\phi_n(x) \overline{\phi_n(0)}}{E_n-E} \right) \left( \sum_{n \neq k} \frac{\phi_n(0) \overline{\phi_n(y)}}{E_n-E}\right)}{(E_k-E) \left(\frac{1}{\alpha_R}-\sum_{n \neq k} \frac{|\phi_n(0)|^2(E+\mu^2)}{(E_n-E)(E_n+\mu^2)}\right) - \frac{|\phi_k(0)|^2(E+\mu^2)}{(E_k+\mu^2)}}  \nonumber \\ & & \hspace{2cm} +  \frac{\left(\sum_{n \neq k} \frac{\phi_n(x) \overline{\phi_n(0)}}{E_n-E} \right) \left(\phi_k(0) \overline{\phi_k(y)}\right)}{(E_k-E)\left(\frac{1}{\alpha_R}-\sum_{n \neq k} \frac{|\phi_n(0)|^2(E+\mu^2)}{(E_n-E)(E_n+\mu^2)}\right) - \frac{|\phi_k(0)|^2(E+\mu^2)}{(E_k-E)(E_k+\mu^2)}} 
    \nonumber \\ & & \hspace{3cm} + \frac{ \left(\phi_k(x) \overline{\phi_k(0)} \right) \left( \sum_{n \neq k} \frac{\phi_n(0) \overline{\phi_n(y)}}{E_n-E}\right)}{(E_k-E) \left(\frac{1}{\alpha_R}-\sum_{n \neq k} \frac{|\phi_n(0)|^2(E+\mu^2)}{(E_n-E)(E_n+\mu^2)}\right) - \frac{|\phi_k(0)|^2(E+\mu^2)}{(E_k+\mu^2)}} \;.
\end{eqnarray}
Except for the first term, it is easy to see that all terms are regular near $E=E_k$. For the first term, if we choose $E$ sufficiently close to $E_k$, i.e., if $\frac{|E_k-E|}{|\phi_k(0)|^2} \left|\frac{1}{\alpha_R}-\sum_{n \neq k} \frac{|\phi_n(0)|^2}{E_n-E} + \frac{|\phi_{k}(0)|^2}{E_k + \mu^2} \right|<1$, the first term in the above equation becomes
\begin{eqnarray}
- \frac{\phi_k(x) \overline{\phi_k(y)}}{|\phi_k(0)|^2}  \left(\frac{1}{\alpha_R}-\sum_{n \neq k} \frac{|\phi_n(0)|^2}{E_n-E} + \frac{|\phi_{k}(0)|^2}{E_k + \mu^2} 
\right) + O(|E_k-E|^2)
\end{eqnarray}
so that $G(x,y|E)$ is regular near $E=E_k$ as long as $\phi_k(0)\neq 0$. If we assume the presence of a continuous spectrum, this part of the spectrum does not change by the same reasoning as given before. Hence we have 
\begin{myproposition} Let $\phi_k(x)$ be the bound state wave function of $H_0$ associated with the bound state energy $E_k$.
Then, the bound state energies $E_{k}^{*}$ for $H_0$ modified (perturbed) with attractive $\delta$ interactions satisfy the equation
\begin{eqnarray}
\frac{1}{\alpha_R}- \sum_{n=0}^{\infty} \frac{|\phi_n(0)|^2(E+\mu^2)}{(E_n-E)(E_n+\mu^2)} - \int_{\Lambda} \frac{|\chi_{\lambda}(0)|^2 (E+\mu^2)}{(\lambda-E)(\lambda+\mu^2)} \; d \mu(\lambda)  = 0 \label{boundstatenergyrenorm}\;,
\end{eqnarray}
if $\phi_k(0) \neq 0$ for some this particular $k$. If for this choice of $k$ we have $\phi_k(0)=0$, the bound state energies do not change, $E_{k}^{*}=E_k$.  Moreover, the continuous spectrum of the Hamiltonian modified with $\delta$ interaction is the same as that of $H_0$. 
\end{myproposition}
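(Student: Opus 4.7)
The plan is to mirror the strategy already used in the proof of Proposition 2.1, with the principal function replaced by its renormalized counterpart
\[
\Phi_R(E):=\frac{1}{\alpha_R}-\sum_{n=0}^{\infty}\frac{|\phi_n(0)|^2(E+\mu^2)}{(E_n-E)(E_n+\mu^2)}-\int_{\sigma_c(H_0)}\frac{|\chi_{\lambda}(0)|^2(E+\mu^2)}{(\lambda-E)(\lambda+\mu^2)}\,d\mu(\lambda).
\]
The extra numerator factor $(E+\mu^2)$, produced by the subtraction of the divergent piece at the renormalization scale $-\mu^2$, tames the ultraviolet behaviour that spoiled the naive sum $G_0(0,0|E)$, so $\Phi_R(E)$ is well defined. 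The Krein-type formula for the renormalized Green's function displayed just before the statement is therefore the correct analogue of (\ref{GreensfunctionNR}), and the rest of the argument is driven by its pole analysis.

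First I would split off the $n=k$ term from both the numerator factors and from the denominator $\Phi_R$, exactly as in the unrenormalized setting. The authors have already written the resulting expression in block form in the excerpt; the essential observation is that, after the split, the only piece of $G(x,y|E)$ that carries an explicit factor $(E_k-E)^{-1}$ is
\[
\frac{\phi_k(x)\overline{\phi_k(y)}}{E_k-E}\left(1-\Bigl(1-\tfrac{(E_k-E)}{|\phi_k(0)|^2}\,\widetilde\Phi_R(E)\Bigr)^{-1}\right),
\]
where $\widetilde\Phi_R$ denotes $\Phi_R$ with the $k$-th contribution removed. For $E$ close to $E_k$ and $\phi_k(0)\neq 0$, a geometric-series expansion of the inner inverse shows that the apparent pole at $E=E_k$ cancels, leaving an expression analytic in a punctured neighbourhood of $E_k$. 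All remaining blocks of $G(x,y|E)$ are manifestly regular at $E_k$. Hence the poles of $G$ near $E_k$ are not inherited from $G_0$; they can only arise from zeros of the full denominator, which after reinstating the $k$-th term collapses precisely to the transcendental equation (\ref{boundstatenergyrenorm}).

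If instead $\phi_k(0)=0$, the splitting step is trivial because every cross term containing $\phi_k(0)$ vanishes; only the bare summand $\phi_k(x)\overline{\phi_k(y)}/(E_k-E)$ remains singular at $E=E_k$, so $E_k^*=E_k$ is an eigenvalue of the perturbed operator with the same eigenfunction. For the invariance of the continuous spectrum one invokes Weyl's essential-spectrum theorem in exactly the form used in Proposition 2.1: the difference $R(E)-R_0(E)$ is rank one, explicitly $\Phi_R(E)^{-1}R_0(E)|0\rangle\langle 0|R_0(E)$, hence compact, so $\sigma_{\mathrm{ess}}(H)=\sigma_{\mathrm{ess}}(H_0)$ and the branch-cut structure along $\sigma_c(H_0)$ is preserved.

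The main obstacle I anticipate is the legitimacy of interchanging the spectral sum/integral with the algebraic rearrangements above: unlike in Section 2, the naive sum $G_0(0,0|E)$ is divergent, so every manipulation must be carried out at the level of $\Phi_R$, whose convergence rests on the short-time heat-kernel estimate $K_t(0,0)\sim t^{-d/2}$ together with adequate growth control on $|\phi_n(0)|^2$ and $|\chi_\lambda(0)|^2$. One must check that the renormalized kernel $(E+\mu^2)/((E_n-E)(E_n+\mu^2))$ decays like $E_n^{-2}$, ensuring absolute convergence, and similarly for the continuous contribution; this is the very property that makes the renormalization scheme consistent. Once that analytic input is in place, the proof reduces to a line-by-line re-run of the argument of Proposition 2.1.
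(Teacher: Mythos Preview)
Your proposal is correct and follows essentially the same route as the paper: split off the $k$-th term, combine the two pieces carrying $(E_k-E)^{-1}$, expand the resulting inverse as a geometric series to see the pole cancel when $\phi_k(0)\neq 0$, and invoke Weyl's theorem (rank-one resolvent difference) for the invariance of the continuous spectrum. One small algebraic point: when you actually carry out the combination in the renormalized case, the regular function multiplying $(E_k-E)/|\phi_k(0)|^2$ is not quite $\widetilde\Phi_R(E)$ but $\widetilde\Phi_R(E)+|\phi_k(0)|^2/(E_k+\mu^2)$, the extra constant arising from the identity $(E+\mu^2)/(E_k+\mu^2)=1-(E_k-E)/(E_k+\mu^2)$ applied to the subtracted $k$-th term; this does not affect your regularity argument, but it is the one place where the renormalized computation differs from a verbatim copy of Proposition~2.1.
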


In the renormalized case, the interlacing of the energy eigenvalues is exactly the same as the one in Lemma \ref{Lemma1}, where renormalization is not required,
and if $\alpha_R<0$, then all the bound state energies are shifted downward even further, which is not something obvious. Moreover $\alpha_R \to 0^-$  corresponds to infinitely strong coupling case, so in a sense  ${1\over \alpha_R}$ is the true coupling constant. This is due to the fact that the Green's function is an increasing function of $E$ and for $E\to -\infty$ we find that  the renormalized sum  goes to $ -\infty$, as we let $E\to -\infty$, hence there is always a solution below $-\mu^2$ (this is where the sum vanishes). We will illustrate this for a manifold in an Appendix C.

\begin{myremark}
Note that these results can be interpreted as a generalization of the well-known Sturm comparison theorems to the singular $\delta$ interactions, it is remarkable that even the renormalized case has this property. 
\end{myremark}

Following the same line of arguments as in the regular problem, we now develop a perturbative approximation to the eigenvalues for $H_0$ by the addition of the singular $\delta$ interaction. We assume that $\alpha_R << 1$. Let $E=E_{k}^* = E_k + \delta E_k$, where $E_k$ are the bound state energies of $H_0$. Then, the poles of the Green's function $G(x,y|E)$ are given by the solutions of 
\begin{eqnarray}
\hskip-1cm  & & \frac{1}{\alpha_R} - \sum_{n=0}^{\infty} \frac{|\phi_n(0)|^2 (E_k + \delta E_k +\mu^2)}{(E_n+\mu^2)(E_n-(E_k+\delta E_k))} - \int_{\Lambda} \frac{|\chi_{\lambda}(0)|^2 (E_k + \delta E_k +\mu^2)}{(\lambda+\mu^2)(\lambda-(E_k + \delta E_k))} \; d \mu(\lambda) \nonumber \\ & & \hspace{2cm} = \frac{1}{\alpha_R} -
    \sum_{n\neq k} \frac{|\phi_n(0)|^2 (E_k + \delta E_k + \mu^2)}{(E_n+\mu^2)(E_n-(E_k+\delta E_k))} + \frac{|\phi_k(0)|^2 (E_k + \delta E_k + \mu^2)}{(E_k+\mu^2)(\delta E_k)} \nonumber \\  & & \hspace{3cm} - \int_{\Lambda} \frac{|\chi_{\lambda}(0)|^2 (E_k + \delta E_k +\mu^2)}{(\lambda+\mu^2)(\lambda-(E_k + \delta E_k))} \; d \mu(\lambda) = 0 \;.
\end{eqnarray}
If we  expand above expressions in the powers of $\delta E_k$ and multiply the equation by $\alpha_R \delta E_k$, we get
\begin{eqnarray} \hskip-1cm
& & \delta E_k + \alpha_R |\phi_k(0)|^2 + \frac{\alpha_R |\phi_k(0)|^2}{E_k+\mu^2} \delta E_k - \alpha_R \delta E_k \sum_{n\neq k} \frac{|\phi_n(0)|^2 (E_k+\mu^2)}{(E_n-E_k)(E_n+\mu^2)} \left( 1+ \frac{\delta E_k}{E_n-E_k} +O(\delta E_{k}^{2})\right) \nonumber  \\ & & - \alpha_R \delta E_k \sum_{n\neq k} \frac{|\phi_n(0)|^2 \delta E_k}{(E_n-E_k)(E_n+\mu^2)} \left( 1+ \frac{\delta E_k}{E_n-E_k} +O(\delta E_{k}^{2})\right)  
\nonumber  \\ & & - \alpha_R \delta E_k  \int_{\Lambda}\frac{|\chi_{\lambda}(0)|^2 (E_k+\mu^2)}{(\lambda-E_k)(\lambda+\mu^2)} \left( 1+ \frac{\delta E_k}{\lambda-E_k} +O(\delta E_{k}^{2})\right) \; d \mu(\lambda) \nonumber  \\ & & - \alpha_R \delta E_k  \int_{\Lambda}\frac{|\chi_{\lambda}(0)|^2 \delta E_k}{(\lambda-E_k)(\lambda+\mu^2)} \left( 1+ \frac{\delta E_k}{\lambda-E_k} +O(\delta E_{k}^{2})\right) \; d \mu(\lambda) + O(\delta E_{k}^3)  = 0 \;. \label{perturbativebsrenormcase}
\end{eqnarray}
Let us assume that $\delta E_k$ can be expandable in the power series of $\alpha_R$, that is, $\delta E_k = E_{k}^{(1)} + E_{k}^{(2)} + \cdots$, where $E_{k}^{(n)}$ corresponds to the change in the bound state energy of order $\alpha_{R}^n$. Then, solving $E_{k}^{(1)}$ and $E_{k}^{(2)}$ term by term, we obtain
\begin{eqnarray} \hskip-1cm
    E_{k}^{(1)} & = &  - \alpha_R |\phi_k(0)|^2 \;, \label{E1perturbationrenorm} \\   \hskip-1cm E_{k}^{(2)} & = &  \alpha_{R}^{2} |\phi_k(0)|^2 \left(\frac{|\phi_k(0)|^2}{E_k+\mu^2} - \sum_{n\neq k} \frac{|\phi_n(0)|^2 (E_k+\mu^2)}{(E_n-E_k)(E_n+\mu^2)} -\int_{\Lambda}\frac{|\chi_{\lambda}(0)|^2 (E_k+\mu^2)}{(\lambda-E_k)(\lambda+\mu^2)} d \mu(\lambda) \right) \label{E2perturbationrenorm} \;.
\end{eqnarray}
It is important to notice that the first order result in the bound state energy is the same as the case where the renormalization is not required, except that $\alpha$ is replaced by the renormalized coupling constant $\alpha_R$. However, the above second order result in the bound state energy is completely different from the case where the renormalization is not required.  
For the wave function, we formally obtain the same formula (\ref{wavefunctionnonrenormcase}) for the wave function expansion in $\delta E_k$. Substituting the results for first and second order eigenvalues given above, we find 
\begin{align}
 \psi_{k}^{(0)}(x) & =  e^{-i \theta_k + i \pi} \phi_{k}(x) \;, \\  \psi_{k}^{(1)}(x) & = \alpha_R \phi_k(0) e^{-i \theta_k + i \pi} \left( \sum_{n \neq k} \frac{\phi_n(x) \overline{\phi_n(0)}}{E_n-E_k}  + \int_{\Lambda} \frac{\chi_{\lambda}(x) \overline{\chi_{\lambda}(0)}}{\lambda- E_k}  d \mu(\lambda) \right) \;, 
 \end{align}
and 
\begin{align} \psi_{k}^{(2)}(x) & =  - \frac{\alpha_R^2}{2} \phi_k(x) e^{-i\theta_k+i \pi} |\phi_k(0)|^2 \left(\sum_{n\neq k} 
    \frac{|\phi_n(0)|^2}{(E_n-E_k)^2} + \int_{\Lambda}  \frac{|\chi_{\lambda}(0)|^2}{(\lambda-E_k)^2} d \mu(\lambda) \right) 
    \nonumber \\ &  \hskip-1cm +  \alpha_R^2 \sum_{n \neq k} \frac{\phi_n(x) \overline{\phi_n(0)}}{(E_n - E_k)} \phi_k(0) e^{-i \theta_k + i \pi} \Bigg[\sum_{m\neq k} \frac{|\phi_m(0)|^2 (E_k+\mu^2)}{(E_m-E_k)(E_m+\mu^2)} + \int_{\Lambda} \frac{|\chi_{\lambda}(0)|^2(E_k+\mu^2)}{(\lambda-E_k)(\lambda+\mu^2)}  d \mu(\lambda)    \nonumber \\ & \hspace{4cm} - \frac{ |\phi_k(0)|^2 (E_n+\mu^2)}{(E_n-E_k)(E_k+\mu^2)} \Bigg]  \nonumber \\ &  \hskip-1cm
   +  \alpha_R^2  \int_{\Lambda} \frac{\chi_{\lambda}(x) \overline{\chi_{\lambda}(0)}}{(\lambda-E_k)}   \phi_k(0) e^{-i \theta_k + i \pi} \Bigg[ \sum_{m\neq k} \frac{|\phi_m(0)|^2}{E_m-E_k} + \int_{\Lambda} \frac{|\chi_{\lambda'}(0)|^2}{\lambda'-E_k}  d \mu(\lambda') \nonumber \\ & \hspace{4cm} - \frac{ |\phi_k(0)|^2 (\lambda+\mu^2)}{(\lambda-E_k)(E_k+\mu^2)} \Bigg]   d \mu(\lambda) \;.
\end{align}

\begin{myremark}
We note that the perturbation theory results for the renormalized problem can be recovered from an approach
inspired by field theory. For simplicity, we discuss a Hamiltonian $H_0$ with purely discrete spectrum, that could be for example a free particle moving on a manifold or a harmonic oscillator perturbed by a delta interaction. All the other cases discussed in this paper can similarly be shown. 

We introduce a cut-off in the eigenvalues of $H_0$,   $N$ and define $\delta_N(x,a)$ as our regularized delta-interaction on a manifold.
Here 
\begin{eqnarray}
\delta_N(x,a)=\sum_{n} e^{-n/N}\overline{\phi_n(x)}\phi_n(a)    \;.
\end{eqnarray}
Clearly as $N\to \infty$ we get $\delta_N(x,a)\to \delta(x,a)$ (in the weak sense). Let us define 
\begin{eqnarray}
  {1\over \alpha}= {1\over \alpha_R}+ G_0(a,a|-\mu^2;N) \;,  
\end{eqnarray}
where 
\begin{eqnarray}
G_0(a,a|-\mu^2;N)=\sum_{n} {e^{-n/N}|\phi_n(a)|^2\over E_n+\mu^2} \;,    
\end{eqnarray}
or equivalently,
\begin{eqnarray}
   \alpha= {\alpha_R\over 1+\alpha_RG_0(a,a|-\mu^2;N)} \;.
\end{eqnarray}
Now, within the usual perturbative approach, we assume  $\alpha_R$
 is a {\it  formal parameter, that can be made arbitrarily small so as to organize our expansions accordingly}.
This means we should order everything according to powers of $\alpha_R$ and formally expand, which gives 
\begin{eqnarray}
\alpha=\alpha_R-\alpha_R^2G_0(a,a|-\mu^2;N) + O(\alpha_R^3) \;.    
\end{eqnarray}
The formal  $\delta$-interaction term in the Hamiltonian now becomes,
\begin{eqnarray}
-\left[\alpha_R-\alpha_R^2G_0(a,a|-\mu^2;N) + O(\alpha_R^3)\right]\delta_N(x,a) \;.    
\end{eqnarray}
Hence, a standard perturbative expansion, organized according to the powers of $\alpha_R$ will have mixed terms as {\it the interaction term now is a power series} in the  small parameter $\alpha_R$. (there is a sense in which we can imagine $\alpha_R$ as much smaller than any other quantity and then use some analytic continuation arguments to take the limit $N\to \infty$ while keeping the formulae intact).
First order perturbation now keeps only the first term of  the interaction,
\begin{eqnarray}
   E^{(1)}_{N;k}=-\alpha_R\int_{\mathbb{R}^d} d^d x \; \overline{\phi_k(x)} \delta_N(x,a) \phi_k(x) \;, 
\end{eqnarray}
where $d=2$ or $d=3$. 
The limit $N\to \infty$ reproduces the answer (\ref{E1perturbationrenorm}).
The second order term has two parts, one is from the interaction that should be treated at first order as it has $\alpha_R^2$ and then the leading term that should be treated at second order:
\begin{eqnarray}
     E_k^{(2)} &=&   \alpha_R^2 \sum_{m\neq k} {\int_{\mathbb{R}^d} d^d x \overline{\phi_k(x)} \delta_N(x,a)\phi_m(x) \int_{\mathbb{R}^d} d^d x' \overline{\phi_m(x')} \delta_N(x',a)\phi_k(x')\over E_k-E_m} \nonumber\\ 
& & \hspace{2cm}  + \alpha_R^2 \sum_{m}  {e^{-m/N} |\phi_m(a)|^2\over E_m+\mu^2} \int_{\mathbb{R}^d} d^d x \; \overline{\phi_k(x)} \delta_N(x,a) \phi_k(x) \;.
\end{eqnarray} 
In the limit $N\to \infty$ leaving aside the convergence issues, we see that one gets,
\begin{eqnarray}
  E_k^{(2)} &=&  \alpha_R^2 \sum_{m\neq k} {\overline{\phi_k(a)} \phi_m(a)\overline{\phi_m(a)} \phi_k(a)\over E_k-E_m}
 + \alpha_R^2 \sum_{m}  { |\phi_m(a)|^2\over E_m+\mu^2} \overline{\phi_k(a)}  \phi_k(a) \nonumber\\
&\ & = \alpha_R^2 \Big(|\phi_k(a)|^2 (E_k+\mu^2) \sum_{m\neq k} {|\phi_m(a)|^2\over (E_k-E_m)(E_m+\mu^2)}+{|\phi_k(a)|^2 |\phi_k(a)|^2\over E_k+\mu^2} \Big),\nonumber
\end{eqnarray} 
 where the last term comes from isolating the $m=k$ term from the $G_N(a,a|-\mu^2)$ part.
So our direct approach essentially provides a sound basis for this regularized perturbation theory cancellations and expansions.
\end{myremark}

\section{Possible Generalizations}
\label{PossibleGeneralizations}
\subsection{$N$ Center Case}
\label{NCenterCase}

It is possible to generalize our approach to $N$-pointlike $\delta$ interaction case. When there is no need for renormalization, we will assume that all the couplings are actually positive (hence corresponds to the attractive case). Let us enumerate these points  as $a_1, a_2,..., a_N$ with $a_i \neq a_j$ whenever $i \neq j$ and the associated couplings with $\alpha_1,\alpha_2,...\alpha_N$. We will proceed recursively, and suppose that $H_0$ has the same spectral properties discussed in Section \ref{section2}. If some set of eigenfunctions satisfies $\phi_k(a_1)\neq 0$ then they lead to a shift of this eigenvalue to a new value $E_k^{*_1}$ in between $E_{k-1}$ and $E_k$. For these, the eigenfunctions change to ${\mathcal N}_0G_0(x,a_1|E_k^{*_1})$
where ${\mathcal N_0}$ is the normalization constant. So, we have a new Hamiltonian $H_1$ with a new set of discrete states $\phi_k^{(1)}$ (and possibly a new set of continuum states that we do not venture into calculating). We now are back to the initial case, if we add the $\delta$ interaction at $a_2$, the same construction is repeated, and new eigenvalues $E_k^{*_2}$ for $\phi_k^{(1)}(a_2)\neq 0$ fall in between $E_{k-1}^{*_1}$ and $E_k^{*_1}$. The poles associated with the old eigenvalues are removed, and there are new wave functions given by ${\mathcal N}_1G_1(x, a_2|E_k^{*_2})$. We can now proceed recursively, and define $H_{N}=H_{N-1}-\alpha_N \delta(x,a_N)$ and thus find the Green's function \cite{Besprosvany, recursivegreen}
\begin{equation}
    G_N(x,y|E)=G_{N-1}(x, y|E)+G_{N-1}(x, a_N|E)\Phi^{-1}(E)G_{N-1}(a_N,y|E),
\end{equation}
with $\Phi(E)={1\over \alpha_N}- G_{N-1} (a_N,a_N|E)$. By this recursive argument, the new eigenvalues $E_k^{*_N}$ are in between the eigenvalues $E_{k-1}^{*_{N-1}}$ and $E_k^{*_{N-1}}$. The resulting eigenfunction 
$\phi_k^{(N)}(x)$ is given by ${\mathcal N}_{N-1}G_{N-1}(x, a_N| E_k^{*_N})$, note that 
$G_{N-1}(x,a_N|E_k^{*_N})=G_0(x, a_N| E_k^{*_N})+\sum_{i,j=1}^{N-1} G_0(x, a_i|E_k^{*_N})\Phi^{-1}_{ij}(E_k^{*_N})G_0(a_j,a_N|E_k^{*_N})$,
where $\Phi_{ij} (E)=\frac{1}{\alpha_i} \delta_{ij}- G_0(a_i,a_j|E)$ is the matrix formed by the point centers $a_1,...a_{N-1}$.

Incidentally, the above derivation {\it does not make use of the finiteness of the matrix $\Phi$}, even in case the matrix $\Phi(E)$ requires a renormalization, our derivation remains valid.

\subsection{$\delta$ Interaction Supported on Curves in Plane} 
\label{DeltaPotentialSupportedonCurvesinPlane}

Note that none of the derivations actually rely on the interaction being concentrated at a point, we can generalize to the curve case easily. 
In fact, by extending the above discussion, we can accommodate multiple non-intersecting curves and points cases easily. To make the discussion simpler we consider a single rectifiable curve $\Gamma$ in the plane first, and assume that the spectrum of $H_0$ consists of only a discrete part. 
There are various ways to define the above Hamiltonian in a mathematically rigorous way. One  way is to interpret the above formal interaction by the quadratic form 
\begin{eqnarray}
    \int_{\mathbb{R}^2} |\nabla \psi|^2 d^2 x - \alpha \int_{\Gamma} |\psi|^2 d s \;,
\end{eqnarray}
in the case where $H_0$ is the Laplacian. Then, one can prove that there is a self-adjoint Hamiltonian associated with this quadratic form \cite{brasche1994schrodinger, ExnerYoshitomi2002, Kondej2016}. The other way is to impose the continuity and the jump discontinuity conditions of the normal derivatives at $\Gamma$, (see Remark 4.1 in \cite{brasche1994schrodinger} and \cite{exner2003spectra}). The other methods are based on using scaled potentials \cite{exner2001geometrically}, or direct construction of the resolvent operator \cite{brasche1994schrodinger, posilicano2001krein}. The physical motivation for studying such Hamiltonians is to give a realistic model for trapped electrons due to interfaces between two different semiconductor materials, which are known as leaky graphs, curves or surfaces in the literature \cite{Exner}.

In this section, we  consider rank one modification (perturbations) of $H_0$ in the sense described in \cite{AlbeverioKurasov} and the Hamiltonian is formally given by 
\begin{eqnarray}
H_0 - \alpha |\Gamma \rangle \langle \Gamma | \;, \label{formalHamiltoniancirclepoint}
\end{eqnarray}
where $H_0 =-\frac{\hbar^2}{2m}\Delta + V$ and we have used the Dirac notation for the inner products, and the Dirac delta function $\delta_{\Gamma}$ supported by the curve $\Gamma$ with length $L$ is defined by their action on test functions $\psi$ \cite{Appel}
\begin{eqnarray}
\langle \delta_{\Gamma}| \psi \rangle = \langle \Gamma | \psi \rangle & := & \frac{1}{L(\Gamma)} \int_{\Gamma} \psi \; d s \;, 
\end{eqnarray}
where $ds$ is the integration element over the curve $\Gamma$ and $|\Gamma \rangle \langle \Gamma|$ written in Dirac's bra-ket notation is $\langle \delta_{\Gamma}, \cdot \rangle \delta_{\Gamma}$.

It is well known that the resolvent of free Hamiltonians modified by $\delta$ interaction supported on a curve can be expressed by some explicit formulae involving the resolvent of the free Hamiltonian, and they are known as Krein's formula in the literature \cite{Albeverio2012solvable, AlbeverioKurasov, Exner}. Instead of free Hamiltonian we have here a general Schr\"{o}dinger operator but the formula for the resolvent would be exactly the same as before. Hence, we obtain 
\begin{eqnarray}
    R(E)=R_0(E)+ \frac{1}{\Phi(E)} R_{0}(E) |\Gamma \rangle \langle \Gamma | R_0(E) \;,
\end{eqnarray}
where $\Phi(E)=\frac{1}{\alpha}-G_0(\Gamma, \Gamma|E)$ and $G_0(\Gamma, \Gamma|E)=\frac{1}{L^2} \iint_{\Gamma \times \Gamma} G_0(\gamma(s),\gamma(s')) ds \; ds'$. 
This can be expressed in terms of the Green's functions as
\begin{eqnarray}
    G(x,y|E)=G_0(x,y|E) + \frac{1}{\Phi(E)} G_{0}(x, \Gamma|E) G_0(\Gamma, y|E) \;.
\end{eqnarray}
Here $G_0(x, \Gamma|E)=\frac{1}{L} \int_{\Gamma} G_0(x,\gamma(s)) ds$. A generalization of such $\delta$ interactions supported on curves embedded in manifolds is studied in \cite{KaynakTurgut1}.

Let us define
$g(x,y|E)=\sum_{n\neq k} \phi_n(x)\phi_n(y)(E-E_n)^{-1}$ as well as projections onto  the curve as $\langle x|g(E)| \Gamma\rangle=\sum_{n\neq k}  \phi_n(x)\langle \Gamma| \phi_k\rangle (E-E_n)^{-1}$. For clarity we only assume a discrete spectrum, since generalizing to continuous spectrum is not difficult.
These are holomorphic functions of $E$ {\it in a sufficiently small neighborhood} of $E_k$. By following the same steps as we have done before, and using the expansion of the Green's function we find
\begin{eqnarray}
    G(x,y|E)&=&g(x,y|E)+\frac{\phi_k(x)\phi_k(y)}{E-E_k} + \frac{{\phi_k(x)\langle \phi_k|\Gamma\rangle\over E-E_k}{\langle \Gamma|\phi_k\rangle \phi_k(y)\over E-E_k}}{{1\over \alpha}-\langle \Gamma| g(E)|\Gamma\rangle-|\langle \Gamma|\phi_k\rangle|^2(E-E_k)^{-1}}\nonumber\\
    &\ &     +\frac{\langle x|g(E)|\Gamma\rangle{\langle \Gamma|\phi_k\rangle \phi_k(y)\over E-E_k}}{{1\over \alpha}-\langle \Gamma| g(E)|\Gamma\rangle-|\langle \Gamma|\phi_k\rangle|^2(E-E_k)^{-1}}
    +\frac{{\phi_k(x)\langle \phi_k|\Gamma\rangle\over E-E_k}\langle\Gamma|g(E)|y\rangle}{{1\over \alpha}-\langle \Gamma| g(E)|\Gamma\rangle-|\langle \Gamma|\phi_k\rangle|^2(E-E_k)^{-1}}\nonumber\\
    &\ & + \frac{\langle x|g(E)|\Gamma\rangle\langle \Gamma|g(E)| y\rangle}{{1\over \alpha}-\langle \Gamma| g(E)|\Gamma\rangle-|\langle \Gamma|\phi_k\rangle|^2(E-E_k)^{-1}} \;.
\end{eqnarray}
Here we have assumed that there is  a continuous representative for each eigenvector, so that the expression $\langle \Gamma|\phi_k\rangle= \frac{1}{L}\int_{\Gamma} \phi_k(\gamma(s)) ds$ is well defined, this is true even for the generalized eigenvectors of  a Laplacian modified by some potential which satisfies some conditions \cite{BeliyKovalenkoSemenov} (indeed for  the free Hamiltonian, the domain of $H_0$ is the Sobolev space and these expressions for arbitrary vectors in the domain are always well-defined by the Sobolev embedding theorem \cite{HaroskeTriebel}). 
Just as before, multiplying by $(E-E_k)$ the numerators and denominators, and expanding around $E_k$, we see the cancellation of the pole at $E_k$, as long as $\langle \Gamma|\phi_k\rangle\neq 0$.

In this case we have the new wave functions
given by 
\begin{equation}
    \psi_k(x)={\mathcal N} G_0(x, \Gamma|E_k^*)={\mathcal N} \int_{\Gamma}  G_0(x, \gamma(s)|E_k^*) d s \;. 
\end{equation}
Eigenvalues and eigenvectors, for multiple curves and points are now  constructed by iterating the above set of arguments for each additional rank one perturbation.

\begin{myremark}
In all our discussions,  it is {\it essential} to have elliptic operators with some special properties (typically a summability condition on the potentials) to allow for generalized eigenvectors with certain regularity properties, as we evaluate them at a point or integrate them over a curve.
\end{myremark}

\subsection{A Particle  in a Compact Manifold under the Influence of a   $\delta$ Interaction }
\label{AParticleMovinginaCompact ManifoldandInteractingwithdeltaPotential}

Another possibility for $H_0$ could correspond to a system where a  particle is intrinsically moving on a two or three dimensional compact and connected manifold $M$ with the metric structure $g$:
\begin{eqnarray}
    (H_0 \psi)(x)= - \frac{\hbar^2}{2m} \left( \frac{1}{\sqrt{\det g}} \sum_{i,j=1}^{d} \frac{\partial}{\partial x^i} \left( \sqrt{\det g} g^{ij} \frac{\partial \psi(x)}{\partial x^j}\right)  \right)  \;,
\end{eqnarray}
where $x= (x^1, \ldots, x^d)$ are the local coordinates, and $g^{ij}$ are the components of inverse of the metric $g$. Then it is well known 
\cite{Rosenberg, chavel2} that there exists a complete orthonormal system of $C^\infty$
eigenfunctions $\{\phi_n \}_{n=0}^{\infty}$ in $L^2(M)$ and
the spectrum $\sigma(H_0)=\{E_n\} = \{0 = E_0
\leq E_1 \leq E_2 \leq \dots\}$, with $E_n$ tending
to infinity as $n \rightarrow \infty$ and each eigenvalue has
finite multiplicity: $H_0 \phi_n=E_n \phi_n$.
Some eigenvalues are repeated according to their multiplicity. The
multiplicity of the first eigenvalue $E_0=0$ is one and the
corresponding eigenfunction is constant.

When we add a point like $\delta$ interaction on such Schr\"{o}dinger operator $H_0$, physically we model  a particle moving intrinsically in a two dimensional compact manifold $\mathcal{M}$ (without boundary) and interacting with a $\delta$ source located at some point $a$ in $\mathcal{M}$. This problem requires renormalization due to the short distance singular behavior of the free Green's function on a manifold, which can be similarly seen from the short time expansion of the heat kernel $K_t(a,a) \sim 1/t$ as $t \to 0^+$ and the formula (\ref{intrepgreen}) for Riemannian manifolds \cite{Grigoryan}. This is the same singular structure of the diagonal free Green's function and the reason for this is based on the fact that the divergence here appears due to the short time behavior of the heat kernel and the manifold $\mathcal{M}$ locally looks like a flat space around the point where the $\delta$ interaction is located. This has been discussed in our previous works \cite{AltunkaynakErmanTurgut, ErmanTurgut, erman2017number, CaglarErmanTurgut} for finitely many $\delta$ centers. Then, the Green's function can be similarly expressed as
\begin{eqnarray}
G(x,y|E)= G_0(x,y|E) + \frac{G_0(x,a|E) G_0(a,y|E)}{\frac{1}{\alpha_R}- \sum_{n=0}^{\infty} \frac{|\phi_n(a)|^2(E+\mu^2)}{(E_n-E)(E_n+\mu^2)}} \;,
\end{eqnarray}
for any points $x,y$ and $a$ in $\mathcal{M}$. Note that this expression is formally the same as our previous formula for a two dimensional problem without the need for a continuous spectrum. Then, all the results that we have obtained for previous problems follow easily in this case as well. Consequently, we do not repeat our arguments, instead in the next section we concentrate on a slightly different model on a compact manifold, and there we provide some more details. The convergence of the series $\sum_{n=0}^{\infty} \frac{|\phi_n(a)|^2(E+\mu^2)}{(E_n-E)(E_n+\mu^2)}$ will be shown in Appendix A.

\subsection{A Semirelativistic Model on a Two Dimensional Compact Manifold}

In our previous work \cite{caglarturgut},  a possible relativistic model of $\delta$ interactions on a two dimensional compact Riemannian manifold $(\mathcal{M},g)$ is proposed, in a second quantized language the model Hamiltonian is written as
\begin{equation}
    H= \frac{1}{2}\int_{\mathcal{M}} d_g^2x : \phi^\dagger (-\nabla^2_g+m^2)\phi+\pi^2: \; - \, \alpha \, \phi^{(-)}(a)\phi^{(+)}(a) \; ,
\end{equation}
where $d_g^{2}x = \sqrt{\det g} \; dx_1 \, dx_2$ and $(x_1,x_2)$ is the local coordinates. The positive and the negative frequency part of the bosonic field $\phi$ is denoted by $\phi^{(-)}$ and $\phi^{(+)}$, respectively. They can be expanded in terms of the creation and annihilation operators indexed by $n$, that is,
\begin{eqnarray}
    \phi^{(+)}(x) & = & \sum_{n=0}^{\infty} {f_n(x)\over \sqrt{2\omega_n}}a_n , \\ 
       \phi^{(-)}(x) & = & \sum_{n=0}^{\infty} {\overline{f_{n}(x)}\over \sqrt{2\omega_n}}a^{\dagger}_n
\end{eqnarray}
where $f_n(x)$ are the eigenfunctions of the Laplace Beltrami operator $\nabla_{g}^2= \frac{1}{\sqrt{\det g}} \sum_{i,j=1}^{2} 
\frac{\partial}{\partial x^i} \left( g^{ij} \sqrt{\det g} \frac{\partial}{\partial x^j}\right)$:
\begin{equation}
    -\nabla_{g}^2 f_{n}(x) = e_n^2 f_{n}(x) \;,
\end{equation}
and $\omega_{n}= \sqrt{e_n^2+m^2}$, note that we use units where $\hbar=1$ and $c=1$. Therefore $\omega_n$ and energy has the same units.  We remark that the index $n$ plays the same role as momentum  $k$ in flat space,  although there is {\it no relation} with a (possible) translation symmetry in general. If we restrict our model to a one particle sector 
\begin{equation}
    |\psi\rangle= \sum_{n=0}^{\infty} \psi_n {a^\dagger_n\over \sqrt{2 \omega_n}}|0\rangle,
\end{equation}
and following the same steps as in \cite{caglarturgut} we will find the Green's function restricted to one-particle sector. To find the Green's function, we need to solve  in the eigenfunction basis,  
\begin{equation}
(H-E)|\psi\rangle = |\chi\rangle \;,
\end{equation}
for a given state $|\chi\rangle$. As a result, the one-particle sector Green's function can be found by solving, 
\begin{equation}
\left(\sqrt{e_{n}^2 + m^2 }-E \right)\psi_{n}- \alpha \sum_{l=0}^{\infty} \frac{\overline{f_{n}(a)}}{\sqrt{\omega_{l}}} \frac{f_{l}(a)}{\sqrt{\omega_{l}}}\psi_{l} = \chi_{n}
\;.\end{equation}
The solution $\psi_n$ of this equation is given by 
\begin{equation}
\psi_{n} = \frac{1}{(\omega_{n} -E)}\chi_{n} + \frac{1}{(\omega_{n}-E)} \overline{f_{n}(a)} \bigg[\frac{1}{\alpha}- \sum_{l=0}^{\infty} \frac{|f_{l}(a)|^2}{\omega_{l}(\omega_{l}-E)}\bigg]^{-1}\sum_{r=0}^{\infty} \frac{f_{r}(a) }{\omega_{r}}\frac{1}{(\omega_{r}-E)}\chi_{r} \;.
\end{equation}
To read the Green's function, notice that
\begin{eqnarray} & & 
\sum_{n=0}^{\infty} f_{n}(x) \frac{\psi_{n}}{\sqrt{\omega_n}} = \psi(x)  =   \sum_{n=0}^{\infty} \int_{\mathcal{M}} d_{g}^{2} y \;  \frac{f_{n}(x) \overline{f_{n}(y)}}{(\omega_n-E)} \chi(y) \nonumber\\
 & & \hspace{1cm} +  \sum_{n=0}^{\infty} \frac{f_{n}(x)}{ (\omega_{n}-E)} \frac{\overline{f_{n}(a)}}{\sqrt{\omega_{n}}} \bigg[\frac{1}{\alpha}- \sum_{l} \frac{|f_{l}(a)|^2}{\omega_{l}(\omega_{l}-E)}\bigg]^{-1}  \sum_{r} \int_{\mathcal{M}} d_{g}^{2} y \; \frac{f_{r}(a) }{\sqrt{\omega_{r}}}\frac{\overline{f_{r}(y)}}{(\omega_{r}-E)} \chi(y) \;, \end{eqnarray}
where we have used $\chi_n=\sqrt{\omega_n} \int_{\mathcal{M}} d_{g}^{2} x \; \chi(x) \overline{f_{n}(x)}$ and orthogonality of eigenfunctions $f_n$. We can then write this as an equation,
\begin{eqnarray}
  \psi(x)=  \int_{\mathcal{M}} d_{g}^{2} y \; G_0(x,y|E)\chi(y) +\tilde{G}_0(x,a|E) \Phi(E)^{-1}  \int dy \; \tilde{ G}_0(a,y|E)\chi(y) \;,
\end{eqnarray}
where
\begin{eqnarray}
    G_0(x,y|E) & = & \sum_{n=0}^{\infty} \frac{f_{n}(x) \overline{f_{n}(y)}}{(\omega_{n}-E)}  \\
    \tilde{ G}_0(x,y|E) & = & \sum_{n=0}^{\infty} \frac{f_{n}(x)}{ (\omega_{n}-E)} \frac{\overline{f_{n}(y)}}{\sqrt{\omega_{n}}} \\ 
  \Phi(E) & = & \frac{1}{\alpha}- \sum_{n=0}^{\infty} \frac{|f_{n}(a)|^2}{\omega_{n}(\omega_{n}-E)} \;.
\end{eqnarray}
Then we can find the Green's function as
\begin{equation}
G(x,y|E)= G_0(x,y) + \tilde{G}_0(x,a|E)\Phi^{-1}(E) \tilde{G}_0(a,y|E)
.\end{equation}
Note that to demonstrate the 
cancellation of original poles, we need to carefully look at the analytic structure of $G(x,y)$.
\begin{equation}
G(x,y)=\sum_{n=0}^{\infty}  \frac{f_{n}(x) \overline{f_{n}(y)}}{(\omega_{n}-E)} + \sum_{n=0}^{\infty} \frac{f_{n}(x)}{ (\omega_{n}-E)} \frac{\overline{f_{n}(a)}}{\sqrt{\omega_{n}}}\Phi^{-1}(E) \sum_{r=0}^{\infty}  \frac{f_{r}(a) }{\sqrt{\omega_{r}}}\frac{ \overline{f_{r}(y)}}{(\omega_{r}-E)}
\end{equation}
Indeed, the sum in the definition of $\Phi$ is divergent. This can be seen by following the same line of arguments discussed in \cite{caglarturgut}. We will summarize it for the sake of completeness. First we rewrite the expression $\frac{1}{\omega_n (\omega_n-E)}$ as
\begin{eqnarray}
    \frac{1}{\omega_n (\omega_n-E)} = -\frac{1}{E} \left( \frac{1}{\omega_n}- \frac{1}{\omega_n -E}\right) =- \frac{1}{E} \int_{0}^{\infty} e^{-s \omega_n}(1-e^{s E}) \; ds \;,
\end{eqnarray}
where we assume that $\Real{(\omega_n-E)}>0$. The later formulae defining the resolvent in  terms of the heat kernel   is often  assumed to be analytically continued onto the complex plane. Using the subordination identity 
\begin{eqnarray}
    e^{-s A} = \frac{s}{2\sqrt{\pi}} \int_{0}^{\infty} e^{-s^2/(4u)-u A^2} \; \frac{d u}{u^{3/2}} \;, \label{subordination}
\end{eqnarray}
and the eigenfunction expansion of the heat kernel associated with the Laplace-Beltrami operator on Riemannian manifolds \cite{Rosenberg}
\begin{eqnarray} \label{eigenfunctionexpansionofheatkernel}
    K_u(x,y)= \sum_{n=0}^{\infty} e^{-u e_n^2} f_{n}(x) \overline{f_{n}(y)} \;,
\end{eqnarray}
we obtain
\begin{eqnarray} & & 
    \sum_{n=0}^{\infty} \frac{|f_{n}(a)|^2}{\omega_{n}(\omega_{n}-E)}  = -\frac{1}{E} \int_{0}^{\infty} \bigg( \int_{0}^{\infty}  K_u(a,a) e^{-u m^2 + s^2/(4u)} \; \frac{d u}{u^{3/2}} \bigg) \frac{(1-e^{s E}) s}{2\sqrt{\pi}}  \; d s \;.
\end{eqnarray}
Using integration by parts and scaling the integration variable, we find
\begin{eqnarray} & & 
    \sum_{n=0}^{\infty} \frac{|f_{n}(a)|^2}{\omega_{n}(\omega_{n}-E)}  = \frac{1}{\sqrt{\pi}} \int_{0}^{\infty} \bigg( \int_{0}^{\infty}  K_u(a,a) e^{-u m^2 + s E \sqrt{u}} \; d u \bigg) e^{-s^2/4} \; d s \;.
\end{eqnarray}
Due to the short ``time" asymptotic behavior of the diagonal heat kernel $K_u(a,a) \sim \frac{1}{u}$ for two dimensional Riemannian manifolds, the integral over $u$ is divergent. For this reason, we apply the idea of renormalization and introduce a short ``time" cut off $\epsilon$ in the lower limit of $u$ integral. This corresponds to the sum over $\tau$ up to a cut off $N$. Then we choose the coupling constant depending on this cut off in such a way that 
\begin{eqnarray}
    \frac{1}{\alpha(N)} = \frac{1}{\alpha_R} + \sum_{n=0}^{N}  \frac{|f_{n}(a)|^2}{\omega_{n}(\omega_{n}+\mu^2)} \;,
\end{eqnarray}
where $-\mu^2$ can be  related to the experimentally measured bound state energy of the particle below the lowest eigenvalue $\omega_0$ (in the presence of single Dirac $\delta$ center) as we will see. To this end we remove the cut off by sending $N\to \infty$ in $\Phi$ and get the renormalized $\Phi$
\begin{eqnarray}
\Phi_R &=&  \frac{1}{\alpha_R}+ \sum_{n=0}^{\infty} \bigg[ \frac{|f_{n}(a)|^2}{\omega_{n}(\omega_{n}+\mu^2)} -\frac{|f_{n}(a)|^2}{\omega_{n}(\omega_{n}-E)} \bigg]\nonumber\\
&=&  \frac{1}{\alpha_R}+\sum_{n=0}^{\infty} |f_{n}(a)|^2 \frac{(-E-\mu^2)}{\omega_{n} ( \omega_{n} +\mu^2)( \omega_{n} -E)}\nonumber\\
&=&   \frac1{\alpha_R}- ( E+\mu^2)\sum_{n=0}^{\infty} \frac{ |f_{n}(a)|^2}{\omega_{n} ( \omega_{n} +\mu^2)( \omega_{n} -E)} \;.
\end{eqnarray}
This expression when the real part of $\omega_0-E$ is positive can be shown to be finite by using heat kernel estimates (which is done in \cite{caglarturgut}), one can furthermore show than as long as $E\neq \omega_n$ this expression is well-defined and analytic.
We take $E$ around $E_k$ and isolate this particular pole in our expressions,
\begin{eqnarray} & & \sum_{n \ne k }  \frac{f_{n}(x) \overline{f_{n}(y)}}{(\omega_{n}-E)}  +  \frac{f_{k}(x) \overline{f_{k}(y)}}{(\omega_{k}-E)}  + \bigg(\sum_{n \ne k} \frac{f_{n}(x)}{ (\omega_{n}-E)} \frac{\overline{f_{n}(a)}}{\sqrt{\omega_{n}}} + \frac{f_{k}(x)}{ (\omega_{k}-E)} \frac{\overline{f_{k}(a)}}{\sqrt{\omega_{k}}}\bigg)\nonumber\\
& & \hspace{1cm} \times \frac{1}{  \frac1{\alpha_R}- ( E+\mu^2)\Big[\sum_{r \ne  k} \frac{ |f_{r}(a)|^2}{\omega_{r} ( \omega_{r} +\mu^2)( \omega_{r} -E)}  +  \frac{ |f_{k}(a)|^2}{\omega_{k} ( \omega_{k} +\mu^2)( \omega_{k} -E)}\Big]} \nonumber \\ & & \hspace{2cm} \times \bigg(\sum_{l \ne k}  \frac{f_{l}(a) }{\sqrt{\omega_{l}}}\frac{\overline{f_{l}(y)}}{(\omega_{l}-E)} +  \frac{f_{k}(a) }{\sqrt{\omega_{k}}}\frac{\overline{f_{k}(y)}}{(\omega_{k}-E)}\bigg) \;. 
\end{eqnarray}
Note that here the singular part comes from,
\begin{equation}
\frac{f_{k}(x) \overline{f_{k}(y)}}{(\omega_{k}-E)}- \frac{|f_{k}(a)|^2 f_{k}(x)\overline{f_{k}(y)}}{\omega_{k}(\omega_{k}-E)^2 \big[ (E+\mu^2)\frac{|f_{k}(a)|^2}{\omega_{k}(\omega_{k}-E)(\omega_k +\mu^2)}+ \text{regular terms} \big ]}+ \text{regular terms}
\;.\end{equation}
We now use $E+\mu^2=E-\omega_k+\omega_k+\mu^2$ in the numerator and reduce this to 
\begin{equation}
\frac{f_{k}(x) \overline{f_{k}(y)}}{(\omega_{k}-E)}- \frac{|f_{k}(a)|^2 f_{k}(x)\overline{f_{k}(y)}}{\omega_{k}(\omega_{k}-E)^2 \big(\frac{|f_{k}(a)|^2}{\omega_{k}(\omega_{k}-E)}+ \text{regular terms} \big)}+ \text{regular terms}
\;. \end{equation}
Following our previous arguments, it is seen that the pole at $\omega_k$ is now cancelled.
As a result, we see that the interaction we propose has the same property as in the Sturm-Liouville interlacing theorem, the picture we have for the two dimensional non-relativistic problem applies equally well here. In Appendix C, we prove that there is always a bound state below $\omega_0$ for all choices of $\alpha_R$, albeit one should be cautious not to push the bound state below $-m$ for internal consistency of the model.

\section*{Final Remarks} 

In this work, we have studied the spectral properties of the Schr\"{o}dinger operators $H_0$ with $\delta$ interactions, where $H_0$ satisfies some mild conditions, which are usually assumed to hold in most of the quantum systems. In contrast to the typical works in the literature, we  work out the pole structure of $G$ and explicitly show that poles of the initial Green function $G_0$ are removed (as long as the original eigenfunction does not vanish at the location of the $\delta$ function) from $G$, through the use of eigenfunction expansions. We show that new bound state energies can be found by solving equation $\Phi(E)=0$, where $\Phi$ is explicitly defined in terms of the diagonal Green's function if the center of the $\delta$ function is not located at one of the nodes of the initial eigenfunction $\phi_k$. These results are established in one dimension for point like $\delta$ interactions and the reflectionless potential for $H_0$ is studied as an example. A different kind of perturbative approach in finding eigenvalues and eigenfunctions of the full system is presented up to second order in a heuristic way. 
These ideas are then extended to the case, where the renormalization procedure is needed. In that case, we  obtained similar results, except that the perturbative calculations differ at the second order of the coupling constant of $\delta$ interaction. Finally, we summarize some possible further extensions of our results to the more general $\delta$ interactions (supported on curve) and multi-center case, and to the case in which a particle is moving on a compact two dimensional manifold, and to a semi-relativistic model on a compact manifold.

\section*{Data availability statement} 
No new data were created or analysed in this study.

\section*{Acknowledgments}

F. Erman would like to thank H. Uncu for useful discussions on the spectrum of harmonic oscillator modified by $\delta$ interaction, and A. Bohm and M. Gadella for giving helpful insights on the Rigged Hilbert spaces and eigenfunction expansions. O. T. Turgut would like to thank A. Michelangeli  for various discussions on singular interactions and J. Hoppe for discussions and encouragement on this project.  Moreover, O. T. Turgut thanks M. Deserno for the kind invitation to Carnegie-Mellon University where this work has come to completion.

\section*{Appendix A: Convergence of the series $\sum_{n=0}^{\infty} \frac{|\phi_n(a)|^2(E+\mu^2)}{(E_n-E)(E_n+\mu^2)}$}

We verify that Green's function is finite apart from the obvious poles we claim, that is, we would like to prove that 
\begin{equation}
\lim_{N\to \infty} \sum_{n=0}^N \frac{|\phi_n(a)|^2}{(E_n-E)(E_n + \mu^2 )} < \infty \;.
\end{equation}
On a compact manifold, the  Laplace-Beltrami operator $-\nabla_{g}^2$ has a discrete spectrum  $ 0 \le E_0 \le E_1 \le E_2 ... \le E_n  \rightarrow \infty $, and with finite multiplicity. For simplicity, let us first consider the case where $E \in \mathbb{R}^+$. We then  choose a {\it finite} $M$ sufficiently large that $ E_M > 2 \left(E + \frac{\mu^2 }{2} \right)$ and note that $E_n  \geq E_M$ for $n>M$. Then, it is easy to see that for $n \geq M$, (with obvious requirement $N>M$ as $N\to \infty$ eventually), we have
\begin{eqnarray}
\frac{|\phi_n(a)|^2 }{(E_n-E)(E_n + \mu^2 )} <  2 \frac{|\phi_n(a)|^2}{(E_n + \mu^2 )^2} \;. 
\end{eqnarray}
This implies that 
\begin{eqnarray}
\sum_{n \ge M}^N \frac{|\phi_n(a)|^2 }{(E_n-E)(E_n + \mu^2 )} <  2 \sum_{n=0 }^N \frac{|\phi_n(a)|^2}{(E_n + \mu^2 )^2}  \;.
\end{eqnarray}
The right hand side can be expressed as 
\begin{eqnarray}
    \sum_{n=0 }^N \frac{|\phi_n(a)|^2}{(E_n + \mu^2 )^2} = \sum_{n=0}^N  \int_0^\infty  dt \; t \; |\phi_n(a)|^2e^{-(E_n + \mu^2 )t} \;,
\end{eqnarray}
and using the eigenfunction expansion of the heat kernel (\ref{eigenfunctionexpansionofheatkernel}) we obtain in the limit $N \to \infty$
\begin{eqnarray}
       \sum_{n=0}^{\infty} \frac{|\phi_n(a)|^2}{(E_n + \mu^2 )^2}  = \int_0^\infty  dt \; t \; K_t(a,a)e^{- \mu^2 t} \;.
\end{eqnarray}
The upper bounds of the diagonal part of the heat kernel on compact manifolds is given by (see \cite{pointinteractionsonmanifolds2}) 
\begin{eqnarray}
    K_t(a,a) < \left( \frac{1}{V(\mathcal{M})} + \frac{C}{t} \right) \;,
\end{eqnarray}
where $V(\mathcal{M})$ is the volume of $\mathcal{M}$ and $C$ is a constant depending only on the geometric properties of $\mathcal{M}$. This upper bound shows that $  \sum_{n=0 }^\infty \frac{|\phi_n(a)|^2}{(E_n - E)(E_n+\mu^2)}$ is convergent. 

For the complex values of $E=E_R + i E_I$ with $E_I \neq 0$, we can also show that the above series is absolutely convergent since
\begin{eqnarray}
\sum_{n=0}^{\infty} \frac{|\phi_n(a)|^2 }{|(E_n-E_R -iE_I)|(E_n + \mu^2 )} & = &  \sum_{n=0}^{\infty} \frac{|\phi_n(a)|^2 }{\sqrt{(E_n-E_R)^2 + E_I^2 }(E_n + \mu^2 )} \nonumber\\
& \leq &   \sqrt{2} \sum_{n=0}^{\infty} \frac{|\phi_n(a)|^2 }{(|E_n-E_R| +| E_I|) (E_n + \mu^2 )} \\
& \leq &  \sum_{n> M}^{\infty} {\sqrt{2} |\phi_n(a)|^2 \over (E_n+\mu^2)(|E_n-E_R|+|E_I|)}+  \frac{\sqrt{2}}{|E_I|} \sum_{n \leq M } \frac{|\phi_n(a)|^2 }{(E_n + \mu^2 )} < \infty \; \nonumber
\end{eqnarray}
choosing an $M$ such that for $n>M$ we have $E_n>2|E_R|$ we see that all the terms become finite.
Hence in all possible cases we have finite expressions.

One can similarly show that the same sum is convergent for Schr\"{o}dinger operators with locally integrable potentials using the upper bound for the heat kernel 
$K_t(a,a) \leq \frac{1}{4\pi t}$, given in example 2.1.9 in \cite{Davies}.

\section*{Appendix B: A Single Parameter Dependence for the Renormalized Theory}
It is interesting to note that the renormalized theory, actually have a {\it single parameter dependence} even though it seems to allow for two parameters $\alpha_R$ and $\mu^2$ in our formulae. This is of course well-known from the theory of self-adjoint extensions, yet we would like to verify explicitly that the zeros of $\Phi(E)$ will not change if we allow for a special dependence of $\alpha_R$ on $\mu$. 
Let us consider a  compact two dimensional manifold, and  demand  the function $\Phi(E)$ to be invariant under the change of $\mu$ by adjusting $\alpha_R$ accordingly (we can also demand the zeros of $\Phi$ to remain invariant as we change $\mu$, being a meromorphic function with fixed pole structure, we can then see that the function $\Phi$ remains invariant).  This will lead to a differential equation;
\begin{eqnarray}
  \mu  { \partial \over \partial \mu} {1\over \alpha_R}&=&2\mu^2 \sum_{n=0}^{\infty} {|\phi_n(a)|^2\over (E_n-E)(E_n+\mu^2)}-(E+\mu^2) \sum_{n=0}^{\infty} {2\mu^2|\phi_n(a)|^2\over (E_n-E)(E_n+\mu^2)^2}=2\mu^2 \sum_{n=0}^{\infty} {{ |\phi_n(a)|^2\over (E_n+\mu^2)^2}}\nonumber\\
    &=& 2\mu^2 \int_0^\infty dt \; t \; K_t(a,a) e^{-\mu^2 t} 
    \end{eqnarray}
which clearly shows that the left side has no dependence on $E$ and a well-defined function of $\mu^2$, hence can be integrated to find $\alpha_R$ as a function of $\mu$, it is a kind of flow equation. Incidentally, this flow equation for the beta function was previously observed in \cite{ErmanTurgut} for a restricted region of $E$ values.
In the same way, our relativistic model similarly depends on a single parameter, if we adjust the flow of $\alpha_R$ accordingly. Indeed if we compute 
\begin{equation}
    {\partial \over \partial \mu }{1\over \alpha_R}=\sum_{n=0}^{\infty} { |f_n(a)|^2\over \omega_n (\omega_n+\mu)^2}.
\end{equation}
The expression on the right is a positive finite quantity and we observe a similar flow expression, demonstrating single parameter dependence of the theory. (Its relation to the heat kernel can be found if desired using the method in Appendix C)

\section*{Appendix C: The Behavior of the Sum $\sum_{n=0}^{\infty} { (E+\mu^2)|\phi_n(a)|^2\over (E_n-E)(E_n+\mu^2)}$ as $E\to -\infty$}
In order to show that 
\begin{equation}
     \sum_{n=0}^{\infty} { (E+\mu^2)|\phi_n(a)|^2\over (E_n-E)(E_n+\mu^2)}\to -\infty \ {\rm as} \ \ E\to -\infty
\end{equation}
on a compact manifold, we  resort to a well known inequality for the heat kernel due to Cheeger and Yau \cite{CheegerYau}
\begin{eqnarray}
    K_t(x,y) \geq K_{t}^{\kappa}(d_g(x,y)) \;,
\end{eqnarray}
under the assumption that the Ricci curvature as a 2-form is bounded from below by $-\kappa g(.,.)$, which is often a reasonable condition. Here $K_{t}^{\kappa}$ is the heat kernel of the simply connected complete two dimensional manifold of constant sectional curvature and $d_g(x,y)$ is the geodesic distance between $x$ and $y$ on the manifold. In particular, we can choose $K_{t}^{\kappa}(x,y)$ as the heat kernel of the two dimensional hyperbolic manifold. 
Thanks to the work of Davis-Moundavalis \cite{DaviesMandouvalos}, there is a lower bound on the hyperbolic two dimensional manifold for the diagonal heat kernel, 
$C_{DM} {e^{-\kappa t/4}\over t (1+\kappa t)^{1/2}}$ where $C_{DM}$ is a constant, depending on dimension. This bound can even be simplified to (see \cite{ErmanMalkocTurgut})
\begin{equation}
    K_t(a,a)\geq C_{DM}{e^{-3\kappa t/4}\over t} \;.
\end{equation}
Using the Feynman parametrization for the term $1/(E_n-E)(E_n+\mu^2)$
\begin{eqnarray}
 \frac{1}{A B}=\int_{0}^{1} \frac{d u}{(A u + (1-u)B)^2} \;,   
\end{eqnarray}
and exponentiating the denominator via
\begin{equation}
    \frac{1}{(A u + (1-u)B)^2}=\int_{0}^{\infty} e^{-t \left(A u + (1-u)B\right)} \; t \; d t
\end{equation}
and then using the eigenfunction expansion of the heat kernel (\ref{eigenfunctionexpansionofheatkernel}), we can find the following upper bound for the sum  
\begin{eqnarray}
    \sum_{n=0}^{\infty} {(|E|-\mu^2)|\phi_n(a)|^2\over (E_n+|E|)(E_n+\mu^2)} & = &  \sum_{n=0}^{\infty}  (|E|-\mu^2)\int_0^1 {|\phi_n(a)|^2 du \over [ E_n+ u|E|+(1-u)\mu^2]^2} \nonumber \\ & = &  (|E|-\mu^2)\int_0^1 du \int_0^\infty dt \; t \; K_t(a,a)e^{-(t(1-u)\mu^2+u|E|)}\nonumber\\
    &\geq&  C_{DM} (|E|-\mu^2) \int_0^1 {du\over 3\kappa/4 +(1-u)\mu^2 +u|E|} \nonumber \\ & = &  C_{DM}\ln\Big({|E|+3\kappa/4\over \mu^2+3\kappa/4}\Big),
\end{eqnarray}
which demonstrates that the limit goes to $\infty$, hence there is always a solution for $1/\alpha_R<0$.
There are other bounds  with less restrictive conditions but they are more technical, the result does not change.
In a similar spirit, we can use lower bounds for the heat kernel associated with the uniformly elliptic Schr\"{o}dinger operators on ${\mathbb R}^2$ (see, e.g., theorem 3.3.4 in \cite{Davies}). If we have a nonsingular potential (strictly speaking $V$ belongs to Kato class, which also guarantees the self-adjointness of the Schr\"{o}dinger operator), there is a natural lower bound for the diagonal heat kernel given by $C/t$, so we get the same result.

For our semirelativistic model, we can show again that the expression for $\Phi(E)$ when we  choose $E<\omega_0$ always has a root between $-\mu^2$ and $\omega_0$ for positive $\alpha_R$. For internal consistency we do not want our model to develop zeroes below $-m$, yet, in any case we should show that for $\alpha_R<0$ there are solutions. To achieve this, we should verify  that $\Phi(E)$ is monotonically increasing from $-\infty$ to $\infty$ as we change $E$ from $-\infty$ to $\omega_0^-$.
It is easy to see that $\Phi(E)$ is monotone increasing in $E$ as before, in   between the poles,  however it takes some work to show  that it goes to $-\infty$ as $E\to -\infty$.

To this aim, we estimate the sum  from below assuming $E<0$, 
\begin{equation}
    \sum_{n=0}^{\infty} {|f_n(a)|^2\over \omega_n (\omega_n+\mu^2)(\omega_n+|E|)} > \sum_{n=0}^{\infty} {|f_n(a)|^2\over  (\omega_n+\mu^2)^2(\omega_n+|E|)} \;.
\end{equation}
Using the Feynman parametrization
\begin{equation}
    {1\over (\omega_n+\mu^2)^2 (\omega_n+|E|)} = 2 \int_0^1 {u \over \big[\omega_n+ u\mu^2 + (1-u)|E|\big]^3} \; d u \;,
\end{equation}
and exponentiating the denominator
\begin{equation}
    {1\over \big[\omega_n+ u\mu^2 + (1-u)|E|\big]^3} = \int_0^\infty s^2  \; e^{-s \big[\omega_n +u \mu^2 +(1-u)|E|\big]} \; ds,
\end{equation}
and then using the subordination identity (\ref{subordination}), we can express the sum in terms of the heat kernel thanks to the eigenfunction expansion of the heat kernel (\ref{eigenfunctionexpansionofheatkernel}). Then again applying the Cheeger-Yau bound for the heat kernel, assuming $R(.,.)> -\kappa g(.,.)$, we finally get an lower bound for the sum
\begin{equation}
    \sum_{n=0}^{\infty} {e^{-\omega_n^{2} t}|f_n(a)|^2}=e^{-m^2t} K_t(a,a)> C_{DM} {e^{-(m^2+3\kappa/4)t}\over t}.
\end{equation}
Plugging these results back into the original expressions, we find
\begin{eqnarray}
    & &    \sum_{n=0}^{\infty} {|f_n(a)|^2\over \omega_n (\omega_n +\mu^2)(\omega_n +|E|)} \nonumber \\ & & > \frac{1}{\sqrt{\pi}} \int_0^1 u \;  \Bigg[\int_0^\infty s^3  \Bigg(\int_0^\infty { e^{-(m^2+3\kappa/4)t-s^2/4t}\over t^{5/2}} \; dt \Bigg)  e^{-s\big(u \mu^2 +(1-u)|E|\big)}\; ds \Bigg]\;  d u \;. \label{modifiedbesselinequality}
\end{eqnarray}
We recognize that the integral with respect to $t$ is the integral representation of the modified Bessel function \cite{GradRyznik}
\begin{equation}
    K_{3/2}(z)= {1\over 2}\left({z\over 2}\right)^{3/2}\int_0^\infty {e^{-t-z^2/4t}\over t^{3/2+1}} \; dt \;, 
\end{equation}
and the modified Bessel function of order half-integer can be further simplified to the function  
\begin{equation}
    K_{3/2}(z)=\sqrt{2\over \pi z} e^{-z} + \sqrt{2\over \pi} \frac{e^{-z}}{z^{3/2}} \; . \label{besselK32}\end{equation}
It would be sufficient to concentrate on the second term in equation (\ref{besselK32}) for our purpose. Here $z=s \sqrt{m^2 + \frac{3 \kappa}{4}}$. This term after the integration with respect to $s$ in (\ref{modifiedbesselinequality}) becomes
\begin{eqnarray}
   & &  (|E|-\mu^2) \sum_{n=0}^{\infty} {|f_n(a)|^2\over \omega_n (\omega_n+\mu^2)(\omega_n+|E|)} \nonumber \\ & & > {\rm Cst} \; (|E|-\mu^2) \int_0^1 {u \over ((m^2+3\kappa/4)^{1/2} +u\mu^2 + (1-u)|E|)} \; d u \nonumber 
   \\ & & > {\rm Cst} \; (|E|-\mu^2) \int_0^1 {u \over ((m^2+3\kappa/4)^{1/2} +\mu^2 + (1-u)|E|)} \; d u \nonumber 
   \\ & & = {\rm Cst} \; (|E|-\mu^2) \int_0^1 {(1-u) \over ((m^2+3\kappa/4)^{1/2} +\mu^2 + u|E|)} \; d u \nonumber
   \\ & & = {\rm Cst} \frac{|E|-\mu^2}{|E|}\bigg( \frac{|E|+(m^2+3\kappa/4)^{1/2}+\mu^2}{|E|}\ln {|E|+[m^2+3\kappa/4]^{1/2} + \mu^2 \over \mu^2 + [m^2+3\kappa/4]^{1/2}} -1 \bigg) \;,
\end{eqnarray}
hence proving again that the sum goes to $\infty$ as $|E|\to \infty$. Here the first term in (\ref{besselK32}) gives positive and regular contribution to the sum, which do not change our conclusion.

\section*{Appendix D: Degenerate Case}

Here we consider the possibility of degeneracy in the eigenstates, {\it for clarity of our presentation,  we only look into the two-dimensional compact manifold case}. More general cases can be done using exactly the same idea. In this case assume that there is an eigenvalue $E_k$ which is $d$-fold degenerate (for a manifold it is known to be finite for every state, and for its generalization to other cases, it  is  always assumed to be a finite degeneracy). We choose a set of orthonormal vectors $\phi_k^{\alpha}$ to span this space. 
In our Green's function formula, assuming $E$ is around $E_k$, we can explicitly write the leading terms with $E_k$ poles;
\begin{eqnarray}
 & & \hskip-1cm    G(x,y|E)= g(x,y|E) + \sum_{\alpha=1}^{d} { \phi_k^{\alpha}(x)\overline{\phi_k^{\alpha}(y)}\over E_k-E}+ \Bigg[ g(x, a|E)\nonumber\\
    & & \hskip-1cm + \sum_{\alpha=1}^{d} { \phi_k^{\alpha}(x) \overline{\phi_k^\alpha(a)}\over (E_k-E)}\Bigg]{1\over {1\over \alpha_R}- k(a,a|E)- \sum_{\alpha=1}^{d} { \phi_k^\alpha(a) \overline{\phi_k^\alpha(a)}(E+\mu)\over (E_k-E)(E_k+\mu)}}\Bigg[ g(a, y|E) + \sum_{\beta=1}^{d} { \phi_k^\beta(a) \overline{\phi_k^\beta(y)}\over E_k-E}\Bigg] \;,
\end{eqnarray}
here  $\alpha_R$ refers to  the renormalized coupling constant and 
\begin{eqnarray}
    g(x,y|E) &:=& \sum_{n\neq k}^{\infty}  { \phi_n(x) \overline{\phi_n(y)} \over E_n-E} \nonumber \\ k(x,y|E) & := & \sum_{n\neq k}^{\infty} {\phi_n(x) \overline{\phi_n(y)}(E+\mu)\over (E_n-E)(E_n+\mu)} \;.
\end{eqnarray}
We now massage the pole term in the denominator,
\begin{equation}
     { \phi_k^\alpha(a)\overline{\phi_k^\alpha(a)}(E+\mu)\over (E_k-E)(E_k+\mu)}={ \phi_k^\alpha(a) \overline{\phi_k^\alpha(a)}(E+\mu-E_k+E_k)\over (E_k-E)(E_k+\mu)}= { \phi_k^\alpha(a)\overline{\phi_k^\alpha(a)}\over (E_k-E)}-{ \phi_k^\alpha(a)\overline{\phi_k^\alpha(a)}\over (E_k+\mu)}
,\end{equation}
where the pole term is isolated with other part being regular.
In our expression, the crucial part is the product of the two poles terms and then we move one of $E_k-E$ terms into the denominator to get a leading order term plus a power series in $E_k-E$:
\begin{eqnarray}
   & & \hskip-2cm  G(x,y|E)= f(x,y|E) + \sum_{\alpha=1}^{d} { \phi_k^\alpha(x)\overline{\phi_k^\alpha(y)}\over E_k-E}\nonumber\\
    & &  \hskip-1cm + {1\over E_k-E}\sum_{\alpha=1}^{d} { \phi_k^\alpha(x)\overline{\phi_k^\alpha(a)}}{1\over -\sum_\gamma { \phi_k^\gamma(a)\overline{\phi_k^\gamma(a)}}+ (E-E_k)h(a,a|E)} \sum_{\beta=1}^{d} { \phi_k^\beta(a)\overline{\phi_k^\beta(y)}} \;,
\end{eqnarray}
where
\begin{eqnarray}
   & & \hskip-2cm  f(x,y|E) = g(x,y|E) + \frac{g(x,a|E)g(a,y|E)}{{1\over \alpha_R}- k(a,a|E)- \sum_{\alpha=1}^{d} { \phi_k^\alpha(a) \overline{\phi_k^\alpha(a)}(E+\mu)\over (E_k-E)(E_k+\mu)}} \nonumber \\ & & +  \frac{g(x,a|E)}{(E_k-E)\left({1\over \alpha_R}- k(a,a|E)\right)- \sum_{\alpha=1}^{d} { \phi_k^\alpha(a) \overline{\phi_k^\alpha(a)}(E+\mu)\over (E_k+\mu)}} \sum_{\beta=1}^{d} {\phi_k^\beta(a)\overline{\phi_k^\beta(y)}} \nonumber \\ & & + \frac{g(a,y|E)}{(E_k-E)\left({1\over \alpha_R}- k(a,a|E)\right)- \sum_{\alpha=1}^{d} { \phi_k^\alpha(a) \overline{\phi_k^\alpha(a)}(E+\mu)\over (E_k+\mu)}} \sum_{\beta=1}^{d} {\phi_k^\beta(x)\overline{\phi_k^\beta(a)}}  \;,
\end{eqnarray}
and 
\begin{eqnarray}
    h(a,a|E)= k(a,a|E) - \frac{1}{\alpha_R} - \sum_{\alpha=1}^{d} \frac{|\phi_{k}^{\alpha}(a)|^2}{E_k+\mu^2} \;.
\end{eqnarray}
Note that here, $f(x,y|E)$ is regular around $E_k$ and in the denominator, we have a leading finite term and then a regular function multiplied with $E-E_k$, which can be made sufficiently small. 
We now assume that {\it at least one of the wave functions} $\phi_k^\alpha(a)$ is nonzero. Then, we can expand this combination in the denominator and combine them with the regular function $f$ and call it new regular function by $l$. Hence, we find 
\begin{eqnarray}
    G(x,y|E)&=& l(x,y|E) + \sum_{\alpha=1}^{d} {\phi_k^\alpha(x)\overline{\phi_k^\alpha(y)}\over E_k-E}\nonumber\\
    &\ & \ - {1\over E_k-E}\sum_{\alpha=1}^{d} {\phi_k^\alpha(x)\overline{\phi_k^\alpha(a)}}{1\over \sum_{\gamma=1}^{d} {\phi_k^\gamma(a)\overline{\phi_k^\gamma(a)}}}\sum_{\beta=1}^{d} {\phi_k^\beta(a)\overline{\phi_k^\beta(y)}} \;.
\end{eqnarray}
Let us define 
$A_{\alpha\beta}=\overline{\phi_k^\alpha(a)}\phi_k^\beta(a)$, which 
is a manifestly Hermitian matrix, with trace Tr$A=\sum_{\alpha=1}^{d} \overline{\phi_k^\alpha(a)}\phi_k^\alpha(a)=\sum_{\alpha=1}^{d} |\phi_k^\alpha(a)|^2 $
Moreover, if we think of $\phi^\alpha_k(a)$ as components of a vector in $d$ dimensions, $A$ has {\it only one non-zero eigenvalue}, which is Tr$A$, with eigenvector $\phi^\gamma_k(a)$. As $A_{\alpha\beta} v_\beta=0$ for all the vectors $v$ orthogonal to single vector $\phi_k^\beta(a)$, we have $d-1$ zero eigenvalues.
$A_{\alpha\beta}$ being a Hermitian matrix, can be diagonalized by a unitary $U_{\alpha\beta}$ matrix:
\begin{equation}
    U^\dagger A U={\rm diag} (0, 0 , \cdots, {\rm Tr} (A)) \;.
\end{equation}
Consider $\sum_{\beta=1}^{d} U_{\alpha\beta}\phi^\beta_k(x)=\psi_k^\alpha(x)$, $U$ being unitary in the $d$ dimensional subspace, we get orthogonal vectors again in $E_k$ subspace.  
Note that the combination with the pole structure goes over to 
\begin{equation}
\sum_{\alpha=1}^{d} {\phi_k^\alpha(x)\overline{\phi_k^\alpha(y)}\over E_k-E}=\sum_{\alpha=1}^{d} {\psi_k^\alpha(x) \overline{\psi^\alpha_k(y)}\over E_k-E} \;.
\end{equation}
The other part instead
\begin{eqnarray}
   &\ &  - {1\over E_k-E}\sum_{\alpha=1}^{d} { \phi_k^\alpha(x)\overline{\phi_k^\alpha(a)}}{1\over \sum_{\gamma=1}^{d} { \phi_k^\gamma(a)\overline{\phi_k^\gamma(a)}}}\sum_{\beta=1}^{d} {\phi_k^\beta(a)\overline{\phi_k^\beta(y)}}=\nonumber\\
   &\ & =-{1\over E_k-E} {1\over {\rm Tr}(A)} \sum_{\alpha=1}^{d} \phi_k^\alpha(x)A_{\alpha\beta} \overline{\phi^\beta_k(y)}=\nonumber\\
   &\ & =-{1\over E_k-E} {1\over {\rm Tr}(A)} \Big[ \psi^1_k(x) \cdot 0 \cdot \overline{\psi^1_k(y)}+ ...+\psi^{d-1}_k(x)\cdot 0 \cdot \overline{\psi^{d-1}_k(y)} + \psi^d_k(x){\rm Tr}(A) \overline{\psi_k^d(y)}\Big]\nonumber\\
   &\ & =-{1\over E_k-E} \psi^d_k(x)\overline{\psi^d_k(y)} \;.
\end{eqnarray}
Thus we see that there is a single cancellation, the wave function that corresponds to Tr$A$, $\psi^d_k(x)$ terms cancel, and we end up with a $d-1$ dimensional subspace with a pole $E_k$. Degeneracy is lifted by removing  one of the properly chosen eigenstates. 
As one can see, the crux of the argument is this diagonalization process, and it applies equally well in all other cases when there is degeneracy.

\end{document}